\newtheorem{theorem}{Theorem}
\newtheorem{example}[theorem]{Example}
\newtheorem{lemma}[theorem]{Lemma}
\newtheorem{remark}[theorem]{Remark}
\newtheorem{definition}[theorem]{Definition}
\newtheorem{problem}[theorem]{Problem}
\newtheorem{proposition}[theorem]{Proposition}
\newtheorem{corollary}[theorem]{Corollary}
\newtheorem{conjecture}[theorem]{Conjecture}
\newtheorem{question}[theorem]{Question}
\newcommand{\movearrow}{\longrightarrow}
\newcommand{\move}[3][]{#2\hspace*{0.1em plus 0.05em}{\movearrow_{#1}}\hspace*{0.1em plus 0.05em}#3}
\newcommand{\emove}[1]{\move{#1}{\varepsilon}}
\newcommand{\ta}{{\tt a}}
\newcommand{\tb}{{\tt b}}
\renewcommand{\P}{\mathcal P}
\newcommand{\N}{\mathcal N}
\newcommand{\g}{\mathcal G} 
\DeclareMathOperator{\mex}{mex}
\title{Taking-and-merging games as rewrite games\thanks{Supported by the ANR-14-CE25-0006 project of the French National Research Agency and the CNRS PICS-07315 project.}}
\author{ Eric Duch\^ene\affiliationmark{1}
    \and Victor Marsault\affiliationmark{2}
    \and Aline Parreau\affiliationmark{1}
    \and Michel Rigo\affiliationmark{3}}
\affiliation{ LIRIS, Université Claude Bernard Lyon 1, CNRS, France \\
              LIGM, Univiversité Gustave Eiffel, CNRS, France \\
              Department of Mathematics, University of Liège, Belgique}
\keywords{Combinatorial game theory; rewrite games; Grundy values; regular languages; context-free languages; taking-and-merging games.}
\begin{document}
\publicationdetails{22}{2020}{4}{5}{5200}

\maketitle

\begin{abstract}
    This work is a contribution to the study of rewrite games.
    Positions are finite words, and the possible moves are defined by a finite number of
    local rewriting rules $\{\move{u_i}{v_i}\}_{i\in I}$:
    a move consists in the substitution  of one occurrence of $u_i$ by~$v_i$, for some~$i$.
    We introduce and investigate taking-and-merging games, that is, where each rule is of the form
    $\emove{a^k}$.
    We give sufficient conditions for a game to be such that the losing positions (resp.\@ the positions with a given Grundy value) form a regular language or a context-free language. We formulate several related open questions in parallel with the famous conjecture of Guy about the periodicity of the Grundy function of octal games.

    Finally we show that more general rewrite games quickly lead to undecidable problems.
    Namely, it is undecidable whether there exists a winning position in a given regular language, even if we restrict to games where each move strictly reduces the length of the current position.
\end{abstract}

\section{Introduction}

Waldmann \cite{Waldmann:2002} introduces general {\em rewrite games} as follows. 
Let $A$ be a finite alphabet, i.e., a finite set of symbols. We let $A^*$ denote the set of finite words over~$A$. The empty word is denoted by $\varepsilon$. A rewrite system is given by a (finite) set $R\subset A^*\times A^*$ of rules, called $R$-reductions, of the form $\move{u}{v}$. The latter rule can be applied to the word $w=xuy$, $x,y\in A^*$  where we replace one occurrence of $u$ by $v$ and we write $\move[R]{w}{xvy}$. 
We consider only \emph{terminating} rewrite systems, that is, such that there is no infinite chain of~$R$-reductions starting from a given word.
In the rewrite game associated with~$R$, the positions are the words in~$A^*$, and from a position~$w$
the possible moves are those that lead to each word~$w'$ such that~$\move[R]{w}{w'}$.
Starting from a word, also called ground term, $t_1\in A^*$, two players apply alternatively an $R$-reduction of their choice to get a sequence $t_1\movearrow_R t_2 \movearrow_R t_3 \movearrow_R \cdots \movearrow_R t_n$ until no $R$-reduction can be applied. The first player unable to apply an $R$-reduction, because $t_n$ is in normal form (i.e., irreducible), loses the game ($t_n$ is called a \emph{final position} of the game). 

Rewrite games belong to the family of impartial combinatorial games.
In an impartial combinatorial game, two players move alternatively
with perfect information, and the set of valid moves depends only on the position.
The first player unable to move loses the game. 
A complete definition of combinatorial games can be found in~\cite{Siegel}. 
Taking-and-breaking games are famous examples of combinatorial games.
A position consists in several piles of tokens, and
a move consists in removing some tokens from a pile,
and then splitting that pile into smaller piles.
A major issue when studying combinatorial games is the computation of the \emph{outcome}. An impartial combinatorial game position has outcome $\N$ if the player who starts has a winning strategy, and $\P$ otherwise. 

The notion of \emph{Grundy value} (also called \emph{Sprague--Grundy value}
is a refinement of the one of \emph{outcome}: the position with outcome $\P$ are exactly
those whose Grundy value is~$0$.
More precisely, the Grundy value of any position is recursively defined as the $\mex$ (minimum excluded value) of the set of Grundy values of the position reachable in one move.
For example, $\mex \{0,1,3\}=2$, and by convention~$\mex \emptyset = 0$.

\medskip

A background motivation for this work stems from octal games. They are a well-known family of combinatorial games that can be described as rewrite games. 
They are the taking-and-breaking games in which it is never
allowed to split into more than two piles.
An octal game is defined by its valid moves, which may be coded by a (finite or infinite) sequence of integers that are less than or equal to $7$; see~\cite{Siegel} for a formal definition.
Octal games can be translated as rewrite games as follows.
If we have $r$ piles of token with respectively $n_1,\ldots,n_r$ tokens, then a position in such a game can be coded by the word over a two-letter alphabet 
\begin{equation*}
    \tb\ta^{n_1}\tb\ta^{n_2}\tb \cdots \tb\ta^{n_r}\tb.
\end{equation*}
The $\tb$'s play the role of separators between piles of $\ta$'s and one has to carefully choose the convenient reductions to code the game of interest, see \cite[Prop.~3]{Waldmann:2002}. 

\begin{example}\label{exa:0.37}
Let us consider the game over the alphabet $A=\{\ta,\tb\}$, associated with the rewrite system $R=\{\emove{\ta},\emove{\ta\ta},\move{\ta\ta}{\tb}\}$. An example of sequence of play for this game, starting from the position $t_1=\tb\ta\ta\ta\tb\ta\ta\tb$, is
$$
\tb\ta\ta\ta\tb\ta\ta\tb \movearrow_R \tb\ta\ta\ta\tb\ta\tb  \movearrow_R \tb\ta\tb\ta\tb \movearrow_R \tb\tb\ta\tb \movearrow_R \tb\tb\tb~.
$$
In this example, four moves have been played, hence the second player wins the game. Note that this game exactly corresponds to the octal game $0.37$.
Indeed the piles are the block of one or more consecutive $\ta$'s.
A player can remove one token of a pile, possibly emptying it, by using move $\emove{\ta}$.
A player can also remove two tokens from a pile, possibly emptying it, by applying $\emove{\ta\ta}$.
Finally, a player can remove two tokens from a pile and divide the remaining tokens
into two piles, by applying $\move{\ta\ta}{\tb}$ in the middle of a block of $\ta$'s.
\end{example}

The \emph{Grundy sequence} of an octal game is defined as the integer sequence where the $i$-th element is the Grundy
value of the position with one pile of~$i$ tokens.
We may then reformulate a famous conjecture in combinatorial game theory:

\begin{conjecture}[Guy's conjecture \cite{BCG}]\label{conj:guy}
 All finite octal games have an eventually periodic Grundy sequence.
\end{conjecture}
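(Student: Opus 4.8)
The plan is, candidly, not one I can carry to completion: Conjecture~\ref{conj:guy} is a longstanding open problem and no general proof technique is known for it. What I can sketch is the only route currently available---reducing the universal statement to a family of effectively checkable claims about individual games---together with the natural reformulation suggested by the present paper, and an explanation of where both stop short.

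First, I would fix a finite octal game and attempt to prove that its Grundy sequence $(g(n))_{n\ge 0}$ is eventually periodic. The standard tool is the Guy--Smith periodicity theorem: if $k$ is the largest number of tokens that may be removed in one move (equivalently, the length of the longest pattern $\ta^k$ appearing on a left-hand side of a rule), and one observes an initial segment of $(g(n))$ that is periodic with period $p$ from some index $p_0$ onward and is long enough (a threshold linear in $p_0$, $p$ and $k$ suffices), then $(g(n))$ is periodic with period $p$ from $p_0$ on. So for each game one runs the obvious $\mex$ recursion, watches for a candidate period, and---if one appears---certifies it. This settles the conjecture for every game whose computation terminates with a detected period, which is how all currently known cases are established.

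Second, I would try to replace this game-by-game verification by a uniform argument, and the framework of this paper suggests the target: if the unary language $\{\ta^n : g(n)=c\}$ were regular for every Grundy value $c$, then eventual periodicity of $(g(n))$ would be immediate, since every regular subset of $\ta^*$ is eventually periodic. The program would thus be to prove that, for every finite octal game, each Grundy-value class is a regular subset of $\ta^*$---or, more ambitiously, that the full position language $\{\tb\ta^{n_1}\tb\cdots\tb\ta^{n_r}\tb : \text{Grundy value}=c\}$ is regular.

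The main obstacle is, essentially, the entire problem: no such uniform regularity statement is known, the sufficient conditions for regularity obtained in this paper are not met by all octal games, and there is no \emph{a priori} bound on the eventual period in terms of the defining digits of the game. Worse, for some games (notably $0.007$) the Grundy sequence has been computed to enormous length with no period found, so even the truth of Conjecture~\ref{conj:guy} is in doubt. I would therefore present the reduction via the periodicity theorem as the honest state of the art, record the regular-language reformulation as the natural but unproven goal, and leave the universal claim open.
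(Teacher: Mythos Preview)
Your assessment is correct: the statement is a conjecture that the paper does not prove (and explicitly leaves open), so there is no ``paper's own proof'' to compare against. Your discussion---acknowledging the problem is open, citing the Guy--Smith periodicity check as the only known per-game method, and pointing to the regular-language reformulation (which is precisely Waldmann's Theorem~\ref{thm:waldmann} in the paper)---is an accurate summary of the state of the art and matches the paper's own framing.
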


In the context of a rewrite game~$G$, positions are words over a finite alphabet $A$ and we can associate a Grundy value $\g(w)$ with each word $w$ in $A^*$.
Thus, the family of languages $(\mathcal{L}_i)_{i\in\naturals}$, defined by~$\mathcal{L}_i = \g^{-1}(i)$,
is a partition of~$A^*$; they are called the \emph{Grundy languages} of~$G$.
In his paper \cite{Waldmann:2002}, Waldmann makes a correspondence between the regularity
of the Grundy languages of octal games (seen as rewrite games) and the periodicity of the Grundy sequence.

\begin{theorem}[Waldmann, 2002]\label{thm:waldmann}
The Grundy sequence of an octal game is eventually periodic if and only if it has only finitely many non-empty Grundy languages $\mathcal{L}_i$, all of which are regular languages. 
\end{theorem}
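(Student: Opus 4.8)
The plan is to leverage the following structural fact about the rewrite-game encoding of an octal game: the Grundy value $\g(w)$ of a position $w\in\{\ta,\tb\}^*$ equals the nim-sum $\bigoplus_j G(n_j)$, where $n_1,n_2,\ldots$ are the lengths of the maximal blocks of consecutive $\ta$'s in $w$ and $G$ denotes the Grundy sequence of the game. This is the disjunctive-sum decomposition on which the encoding is built (see \cite[Prop.~3]{Waldmann:2002} together with the Sprague--Grundy theorem for sums of games, \cite{Siegel}); note that empty blocks contribute $G(0)=0$, so the formula does not depend on where the separators $\tb$ are placed. In particular $G(n)=\g(\tb\ta^n\tb)$.

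\emph{Eventual periodicity $\Rightarrow$ finitely many Grundy languages, all regular.} Suppose $G(n+p)=G(n)$ for all $n\ge n_0$. Then $\mathrm{Im}(G)$ is finite, hence bounded above by some $2^k-1$; since a nim-sum of integers smaller than $2^k$ is again smaller than $2^k$, we get $\g(w)<2^k$ for every $w$, so $\mathcal{L}_i=\emptyset$ for $i\ge 2^k$. For the regularity of each $\mathcal{L}_i$ I would build a deterministic finite automaton. Let $\phi\colon\naturals\to\{0,\ldots,n_0+p-1\}$ be the folding $\phi(n)=n$ for $n<n_0$ and $\phi(n)=n_0+\big((n-n_0)\bmod p\big)$ for $n\ge n_0$, so that $G(n)=G(\phi(n))$ and $\phi(n+1)$ is a function of $\phi(n)$ (the successor of $c$ is $c+1$, except that the successor of $n_0+p-1$ is $n_0$). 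Take as state set $\{0,\ldots,n_0+p-1\}\times\{0,\ldots,2^k-1\}$: the first coordinate records the folded length of the block of $\ta$'s currently being scanned, the second records the nim-sum of the blocks already completed. Reading $\ta$ advances the first coordinate along $\phi$; reading $\tb$ closes the current block, replacing $(c,s)$ by $(0,\,s\oplus G(c))$; the start state is $(0,0)$; and $(c,s)$ is accepting iff $s\oplus G(c)=i$, which amounts to closing the last (possibly partial) block at end of input. This automaton recognizes exactly $\mathcal{L}_i$, so $\mathcal{L}_i$ is regular; if one insists that only words of the form $\tb\ta^{n_1}\tb\cdots\tb\ta^{n_r}\tb$ are positions, intersect with that (regular) set.

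\emph{Finitely many Grundy languages, all regular $\Rightarrow$ eventual periodicity.} Let the non-empty Grundy languages be $\mathcal{L}_{i_1},\ldots,\mathcal{L}_{i_t}$, each regular. Then $G$ takes its values in $\{i_1,\ldots,i_t\}$ and the sets $S_j=\{n:G(n)=i_j\}=\{n:\tb\ta^n\tb\in\mathcal{L}_{i_j}\}$ partition $\naturals$. Running a DFA for $\mathcal{L}_{i_j}$ on the input $\tb\ta^n\tb$, the state reached after the prefix $\tb\ta^n$ is an ultimately periodic function of $n$ (a sequence in a finite set obtained by iterating the transition for the single letter $\ta$), hence so is the accept/reject verdict, and therefore each $S_j$ is ultimately periodic. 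A partition of $\naturals$ into finitely many ultimately periodic sets forces the colouring $n\mapsto G(n)$ to be ultimately periodic: with $N$ the largest of the preperiods of the $S_j$ and $P$ the least common multiple of their periods, for every $n\ge N$ the integers $n$ and $n+P$ lie in the same $S_j$, i.e.\ $G(n)=G(n+P)$; hence the Grundy sequence is eventually periodic.

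Both directions rest only on elementary automata-theoretic facts — ultimate periodicity of unary slices of a regular language, and finite-state tracking of a bounded running nim-sum together with a folded block length — so I do not expect a deep difficulty once these are assembled. The one genuinely game-theoretic ingredient, and the one that fails for the more general taking-and-merging and rewrite games studied later in the paper, is the nim-sum formula for $\g(w)$: it is precisely what simultaneously bounds $\mathrm{Im}(\g)$ and makes the finite-state construction possible, so the main care goes into invoking and using it correctly.
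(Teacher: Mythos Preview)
The paper does not prove this theorem; it is quoted from Waldmann~\cite{Waldmann:2002} as background motivation, so there is no in-paper proof to compare against. Your argument is correct and is the natural one. The only game-theoretic input is the Sprague--Grundy identity $\g(w)=\bigoplus_j G(n_j)$ over the $\ta$-block lengths, which you rightly source to \cite[Prop.~3]{Waldmann:2002} together with the Sprague--Grundy theorem; once that is granted, the forward direction is the obvious DFA (folded block length in $\{0,\ldots,n_0+p-1\}$ together with a running nim-sum in $\{0,\ldots,2^k-1\}$), and the backward direction is the standard fact that the unary slice $\{n:\tb\ta^n\tb\in\mathcal{L}_{i_j}\}$ of a regular language is ultimately periodic, whence a finite partition of~$\naturals$ into such sets gives an ultimately periodic colouring. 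Your closing observation---that the nim-sum decomposition is precisely the ingredient that breaks down for the taking-and-merging games treated in the rest of the paper---is exactly the point.
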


This nice result translates the notion of periodicity of a taking-and-breaking game into the context of rewrite games. Therefore, the question of the regularity of rewrite games becomes paramount,
and in particular would allow to make progress towards proving or disproving Conjecture~\ref{conj:guy}.
This leads to the general open question below, which we start to address in this article.

\begin{question}
Which rewrite games have Grundy values bounded by a constant $K$ and such that all the languages $\mathcal{L}_0,\ldots,\mathcal{L}_K$ are regular?
\end{question}

The following classical lemma (see \cite{BCG}) characterizes the Grundy languages of a rewrite game;
and will be heavily used throughout this article.

\begin{lemma}\label{lem:Grundy}
Given a rewrite game $G$ over an alphabet $A$, 
the family~$(\mathcal{L}_i)_{i\in\naturals}$ is the only family of languages~$(\mathcal{M}_i)_{i\in\naturals}$ that satisfies:
\vspace{-.75\topsep}
\begin{itemize}\itemsep=.25\parskip\parskip=.5\parskip
    \item for all $i\in I$, every move from words in $\mathcal{M}_i$ leads to a word outside of $\mathcal{M}_i$ (stability property),
    \item for every~$i\in\naturals$, every word~$u \in \mathcal{M}_{i}$, and every~$j<i$,   there exists a move from~$u$ leading to a word in $\mathcal{M}_j$ (absorption property).
\end{itemize}
\end{lemma}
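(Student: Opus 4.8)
The plan is to prove the statement in two halves: first that the Grundy languages $(\mathcal{L}_i)_{i\in\naturals}$ do satisfy the stability and absorption properties, and then that \emph{any} family satisfying both must coincide with $(\mathcal{L}_i)_{i\in\naturals}$. Both halves rest on the hypothesis that $R$ is terminating: the relation $\movearrow_R$ then admits no infinite chain of reductions, hence is well-founded on $A^*$, which is exactly what makes $\g$ well-defined by $\mex$-recursion and what licenses well-founded (Noetherian) induction along $\movearrow_R$ --- to prove a statement at a word $w$, one may assume it at every word reachable from $w$, and in particular at every one-move successor of $w$. As in the paragraph just before the lemma, I take the families in play to be partitions of $A^*$; otherwise the all-empty family would vacuously satisfy both conditions, so some such normalisation is needed for the statement to hold literally.

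For the first half, I would fix a word $w$ with $\g(w) = i$, so that $i = \mex S$ where $S = \{\g(w') : \move[R]{w}{w'}\}$. Since $i \notin S$, no one-move successor of $w$ has Grundy value $i$, hence none lies in $\mathcal{L}_i = \g^{-1}(i)$; letting $w$ range over $\mathcal{L}_i$, this is precisely stability. Since $i$ is moreover the \emph{least} integer not in $S$, we have $\{0,\dots,i-1\}\subseteq S$, so for each $j<i$ some successor of $w$ has Grundy value $j$ and thus lies in $\mathcal{L}_j$; this is precisely absorption.

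For the second half, let $(\mathcal{M}_i)_{i\in\naturals}$ be a partition of $A^*$ satisfying stability and absorption; the goal is to show by well-founded induction on $\movearrow_R$ that $w\in\mathcal{M}_i$ implies $\g(w)=i$. This suffices: if $\g(w)=i$, the partition hypothesis places $w$ in some $\mathcal{M}_{i'}$, which forces $i'=i$ by what was just shown, so $w\in\mathcal{M}_i$; together with the implication this gives $\mathcal{M}_i=\mathcal{L}_i$ for all $i$. For the induction step, assume $w\in\mathcal{M}_i$ and that the implication already holds at every one-move successor of $w$. Each such successor $w'$ lies in a unique $\mathcal{M}_k$ (partition hypothesis) with $\g(w')=k$ (induction hypothesis), so $S = \{\g(w') : \move[R]{w}{w'}\}$ is exactly the set of indices $k$ such that some successor of $w$ lies in $\mathcal{M}_k$. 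Now stability for $\mathcal{M}_i$ yields $i\notin S$, and absorption for $\mathcal{M}_i$ yields $\{0,\dots,i-1\}\subseteq S$; hence $\mex S=i$, that is, $\g(w)=i$, which closes the induction.

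I do not expect a genuine obstacle: this is the classical Sprague--Grundy uniqueness argument, repackaged in terms of languages. The two points that need care are invoking well-founded induction legitimately --- which is exactly where termination of $R$ is used, and without which $\g$ is not even defined --- and, in both halves, keeping track of the fact that the ``$i\notin S$'' part of the equality $\mex S=i$ corresponds to stability while the minimality part corresponds to absorption. It is also worth flagging explicitly in the proof that the families are required to be partitions of $A^*$, matching what has already been arranged for $(\mathcal{L}_i)_{i\in\naturals}$ right before the statement.
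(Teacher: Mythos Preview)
The paper does not actually prove this lemma; it is stated as a ``classical lemma (see \cite{BCG})'' and used as a black box throughout. So there is no paper proof to compare against.

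Your argument is correct and is precisely the standard Sprague--Grundy uniqueness proof: the two defining features of $\mex$ (the value is excluded; all smaller values are attained) correspond exactly to stability and absorption, and well-founded induction along $\movearrow_R$---legitimate because the rewrite system is terminating---pushes the identification $\mathcal{M}_i=\mathcal{L}_i$ through. Your observation that the statement only makes sense for families that partition $A^*$ is also well taken: as literally written the lemma is false (the all-empty family satisfies both bullets), and the paper is silently relying on the partition hypothesis introduced in the paragraph immediately preceding the lemma. Flagging this explicitly, as you do, is the right call.
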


In addition to octal games, some other well-known games have also been considered in the context of rewrite games. It is for example the case of {\em Peg-solitaire} \cite{MooreEppstein:2002, Ravikumar:2004}, where $R$ is of the form $\{\move{\tt{aab}}{\tt{bba}},\allowbreak \move{\tt{baa}}{\tt{aab}}\}$.
In {\em Peg-solitaire}, it has been proved that on one dimensional boards, the set of solvable configurations forms a regular language. In the 2-player version of the game, called {\sc duotaire}, where series of hops can be done in a single move, neither the $\P$ nor the $\N$ positions form a regular nor even a context-free language. Another example of a combinatorial game seen as a rewrite game is the game {\sc clobber} \cite{clobber}, played over a $3$-letter alphabet $\{\ta,\tb,\emptyset\}$ with 
$R=\{
\move{\tt{ab\emptyset}}{\tt{\emptyset\emptyset\ta}},
\move{\tt{ba\emptyset}}{\tt{\emptyset\emptyset\tb}}
\}$.

\medskip

In the following, we assume that the reader is familiar with basic results about formal languages and combinatorial games. 
We refer the reader respectively to~\cite{Hopcroft} and~\cite{Siegel}
for a general reference on these topics.
For any given letter $\ta$ and word~$w$, we let~$|w|_\ta$  denote the number of $\ta$ occurring in the word $w$.
We denote by~$\varepsilon$ the empty word.

\subsection*{Taking-and-merging games}

In most of this article, we consider a family of rewrite games over a two-letter alphabet, say $\{\ta,\tb\}$, where any reduction rule of $R$ is either of the form $\emove{\ta^k}$ or $\emove{\tb^k}$ for some $k$. In a certain way, this family allows us to model a new kind of pile games, where taking moves are combined with merging ones. For example, by following Waldmann's description of octal games with a rewrite system, playing $\emove{\tb}$ from the word $\ta\tb\ta^5$ leads to $\ta^6$ and can be seen as a merging of the piles $\ta$ and $\ta^5$.\\

From now on and for the sake of notation, we will omit the reduction to $\varepsilon$ in the description of the rewrite system. In other words, the games considered here will be denoted by a set
\begin{equation*}
\{\ta^{k_1},\ta^{k_2},\ldots,\ta^{k_n},\tb^{\ell_1},\tb^{\ell_2},\ldots,\tb^{\ell_m}\}
\end{equation*}
where the $k_i$ and $\ell_i$ are positive integers.\\

We now consider a first example of such a taking-and-merging game. Using a convenient invariant (denoted by $S$) is a strategy that will appear in several proofs encountered in this paper.
\begin{example}
Let us consider the game $G=\{\ta^2, \tb\}$. We claim that the DFA (deterministic finite automaton) depicted in Figure~\ref{fig:a2b} computes the Grundy function of $G$: consider a word $w$ and start reading it from the initial state marked with an incoming arrow. Follow transitions reading the word letter by letter from left to right and look at the state reached when reading the last letter of $w$. The states $(0.0)$ and $(0.1)$ correspond to the words of Grundy value $0$, and the states $(1.2)$ and $(1.3)$ to those of Grundy value $1$. First, note that this is true for the two final positions $\varepsilon$ and $\ta$. To prove this result, we define the following quantity for a given word $u$.
\begin{equation*}
    S(u)=(|u|_a-2|u|_b)\bmod{4}
\end{equation*}

  \begin{figure}[htbp]
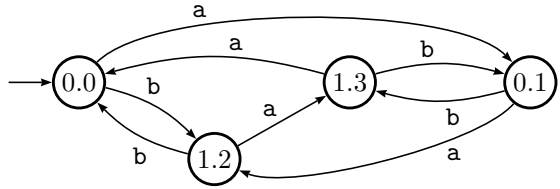

        \centering
\VCDraw{%
        \begin{VCPicture}{(-1,-3)(11,2)}
\LargeState
 \State[0.0]{(0,0)}{0}
 \State[1.3]{(6,0)}{3}
 \State[0.1]{(10,0)}{1}
 \State[1.2]{(3,-1.7)}{2}
\Initial[w]{0}
\ArcR{3}{0}{\ta}
\VArcL[.3]{arcangle=45,ncurv=.4}{0}{1}{\ta}
\ArcL{1}{3}{\tb}
\ArcL{3}{1}{\tb}
\ArcL{0}{2}{\tb}
\ArcL{2}{0}{\tb}
\EdgeL{2}{3}{\ta}
\VArcL[.3]{arcangle=35,ncurv=.4}{1}{2}{\ta}
\end{VCPicture}
}
\caption{A DFA computing the Grundy function of the game $\{\ta^2, \tb\}$.}\label{fig:a2b}
    \end{figure}

One can first observe that for all $i=0,\ldots,3$, every word $u$ recognized by the state $(X.i)$ (for $X\in\{0,1\}$) satisfies $S(u)=i$. To check this property, it suffices to consider each transition of the DFA and verify that $S(u)$ changes accordingly. For example, reading a letter $\ta$ from the state $(1.2)$ increases by $1$ the value of $S(u)$, leading to the state $(1.3)$, while reading a letter $\tb$ decreases by $2$ the value and leads to the state $(0.0)$. Then, in order to prove that the DFA computes the Grundy values, by Lemma~\ref{lem:Grundy}, it suffices to show that any move from a word recognized by a state $(0.X)$ (for $X\in\{0,1\}$) leads to a word recognized by a state $(1.Y)$ (for some $Y\in\{2,3\}$), and that any move from a word recognized by a state $(1.Y)$ leads to some $(0.X)$. These two properties can be easily checked by using the invariant $S(u)$:
\begin{itemize}
\item By definition of $S$, any move $\emove{\ta^2}$ from a word $u$ such that $S(u)=0,1$ leads to a word $u'$ having $S(u')=2,3$, and conversely.
\item Any move $\emove{\tb}$ satisfies the same property, as $S$ is modified by $2\bmod{4}$. 
\end{itemize}
\end{example}

In view of such an example and according to Guy's conjecture, it is natural to wonder whether the regularity of the languages $\mathcal{L}_i$ would hold in the context of taking-and-merging games. In Section 2, we will give a negative answer to this question, for games where both reductions $\emove{\ta}$ and  $\emove{\tb}$ are forbidden. In addition, a proof of context-freeness is given for simple instances of such games. In Section 3, we prove the regularity of several taking-and-merging games. In particular, we exhibit DFAs computing their Grundy functions. Section 4 deals with a discussion about a result of Waldmann about the correlation between the regularity of $\mathcal{L}_0$, the other $\mathcal{L}_i$, and the number of Grundy values. 
The last section explains why we restricted our study to taking-and-merging games: in  the slightly more general settings of strongly-terminating rewrite games (i.e., where each move strictly decreases the length of the position), some problems become undecidable.
Indeed, we show that then it is undecidable whether there exists a winning position in a given regular language $L$ of starting positions.

\section{Not all games lead to regular languages}

Our first result shows that Guy's conjecture does not hold for taking-and-merging games. More precisely, it states that, considering any taking-and-merging game that excludes both reductions $\emove{\ta}$ and  $\emove{\tb}$,
the set of $\P$-positions is not a regular language.

\subsection{Games $\{\ta^{k_1},\ldots,\ta^{k_n},\tb^{\ell_1},\ldots,\tb^{\ell_m}\}$ with $k_1>1$ and $\ell_1>1$}

\begin{theorem}\label{the:notregular}
Let $G$ be the taking-and-merging game  $\{\ta^{k_1},\ldots,\ta^{k_n},\tb^{\ell_1},\ldots,\tb^{\ell_m}\}$, with $k_1\leq k_2 \leq \ldots\leq k_n$ and $\ell_1 \leq \ell_2 \leq \ldots\leq \ell_m$. If $k_1>1$ and $\ell_1>1$, then the language of the $\P$-position of $G$ is not regular.
\end{theorem}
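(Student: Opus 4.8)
The strategy is to show that the $\P$-positions of $G$ cannot be recognized by a finite automaton by exhibiting an infinite family of words that any DFA would be forced to distinguish, i.e.\ a Myhill--Nerode style argument. Since both $\emove{\ta}$ and $\emove{\tb}$ are forbidden (because $k_1,\ell_1 > 1$), a position of the form $\ta\tb\ta\tb\cdots$ with all blocks of size $1$ is a final position, so outcomes are governed by how the lengths of maximal blocks interact. I would first isolate a simple family of positions, say $w_n = \ta^{n}\tb\ta^{n}$ or more cleverly $\ta^p \tb \ta^q$, and understand their outcome: playing inside a block, or merging the two $\ta$-blocks via $\emove{\tb^{\ell_i}}$ (only possible if there is a single $\tb$ and $\ell_1=1$, which is excluded) — so in fact on $\ta^p\tb\ta^q$ the only moves are $\emove{\ta^{k_i}}$ inside one of the two blocks, and the game decomposes as a disjunctive sum of two independent one-pile taking games $\ta^*$ with removal set $\{k_1,\dots,k_n\}$. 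That one-pile game is a classical subtraction-type game whose Grundy sequence is eventually periodic; call its Grundy function $\gamma$. Then $\g(\ta^p\tb\ta^q) = \gamma(p)\oplus\gamma(q)$, which is $0$ iff $\gamma(p)=\gamma(q)$.

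**The separating family.** This alone is not enough for non-regularity — I need the $\tb$-moves to create genuinely non-regular behavior. The key idea: use a position with many blocks, e.g.\ $\ta^{k_1-1}(\tb^{\ell_1}\ta^{k_1-1})^{m}$ padded appropriately, so that the available $\emove{\tb^{\ell_i}}$ moves merge adjacent $\ta$-blocks and thereby let one ``transport'' tokens between blocks in controlled amounts. I would design words $u_N$ (for instance $\tb^{\ell_1} \ta^{N} \tb^{\ell_1}$, or a long alternating word whose single ``defect'' is a block of size $N$) such that $u_N$ is a $\P$-position if and only if $N$ satisfies some arithmetic condition that is \emph{not} eventually periodic in $N$ — the obstruction to regularity. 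Concretely I expect to pick two infinite sequences $(N_i)$ and $(N'_i)$ together with a fixed suffix $z$ (or prefix) such that $u_{N_i}z \in \P$ but $u_{N'_i}z \notin \P$ while $u_{N_i}$ and $u_{N'_i}$ reach the same state of any putative DFA — forcing infinitely many states, a contradiction.

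**Where the real content lies.** The main obstacle is producing the right family and proving the outcome computation for it: once several $\tb^{\ell_i}$-blocks are present, moves can merge arbitrarily many $\ta$-blocks at once (removing an entire $\tb$-run of the right length), so the game no longer decomposes as a clean disjunctive sum and one must reason directly via Lemma~\ref{lem:Grundy} (stability + absorption) about a hand-built candidate set of $\P$-positions. I anticipate the argument splits into cases according to whether $\tb^{\ell_1}$ alone or some larger $\tb^{\ell_i}$ is used, and the combinatorics of which merges are legal. The cleanest route is probably: (i) restrict attention to positions where all $\tb$-runs have length exactly $\ell_1$ and all but one $\ta$-block have a fixed small length, reducing the state to essentially one integer parameter plus bounded data; (ii) show the $\P$/$\N$ classification of these positions, as a function of that integer, is not ultimately periodic — e.g.\ it depends on the base-$k_1$ or on a non-eventually-periodic residue pattern coming from the interaction of the $k_i$'s and $\ell_i$'s; (iii) invoke Myhill--Nerode.

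**Fallback.** If a fully explicit $\P$-position description proves unwieldy, an alternative is a pumping-lemma contradiction: assume $\P$ is regular with pumping length $p$, take a carefully chosen long $\P$-position $w$ with $|w|\ge p$, pump a block of $\ta$'s (or $\tb$'s) to get $w'$, and show directly — using the disjunctive-sum / merging analysis above — that $w'$ has nonzero Grundy value, contradicting $w' \in \P$. This requires only being able to compute (or bound) Grundy values of a single well-chosen infinite family rather than classifying all $\P$-positions, so I would attempt this first and fall back to the Myhill--Nerode version only if the pumping freedom is too coarse.
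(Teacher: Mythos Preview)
Your proposal is not a proof yet, only an outline of directions, and none of them actually lands on the idea that makes the paper's argument work. The paper's proof avoids Grundy values, disjunctive sums, and periodicity analysis entirely. Instead it engineers positions in which the game is \emph{forced}: every block is kept strictly below the smallest removal threshold except at one ``seam'', so that from each position there is at most one legal move. Concretely, set
\[
u_{i,j}=\tb^{\ell_1-1}(\ta\tb^{\ell_1-1})^{i}(\tb\ta^{k_1-1})^{j}.
\]
In the left part every $\ta$-block has size~$1<k_1$ and every $\tb$-block has size~$\ell_1-1<\ell_1$; in the right part every $\tb$-block has size~$1<\ell_1$ and every $\ta$-block has size~$k_1-1<k_1$. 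The only place any rule applies is at the junction, where a $\tb$-block of size exactly~$\ell_1$ appears; the unique move is $\emove{\tb^{\ell_1}}$. After it, an $\ta$-block of size exactly~$k_1$ is created at the seam, and the unique reply is $\emove{\ta^{k_1}}$, yielding~$u_{i-1,j-1}$. Hence $u_{i,j}$ and $u_{i-1,j-1}$ have the same outcome, $u_{i,0}$ is final (hence~$\P$), and $u_{0,j}$ with $j>0$ has exactly one move to a final position (hence~$\N$). Thus $u_{i,j}\in\P$ iff $i\ge j$, and intersecting~$\P$ with the regular language $\tb^{\ell_1-1}(\ta\tb^{\ell_1-1})^*(\tb\ta^{k_1-1})^*$ gives a non-regular set.

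By contrast, the families you propose---$\ta^p\tb\ta^q$, or $\ta^{k_1-1}(\tb^{\ell_1}\ta^{k_1-1})^m$---either decompose as sums of eventually periodic one-pile games (hence give a regular slice, as you yourself note), or allow many merging moves at once, so the outcome analysis you would need in step~(ii) is genuinely hard and you give no argument for it. Your pumping-lemma fallback has the same defect: pumping a block of $\ta$'s or $\tb$'s inside a position with many legal moves does not come with any tool to compute the outcome of the pumped word. The missing idea is precisely to \emph{suppress} all moves but one by keeping block sizes at $k_1-1$ and $\ell_1-1$; once you have that, the induction is two lines and no Grundy theory is needed.
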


\begin{proof}
Let us show that the intersection of the set of $\P$-positions of $G$ with the regular language $L$, defined below, is not a regular language.
\begin{equation*}
L=\tb^{\ell_1-1}(\ta\tb^{\ell_1-1})^*(\tb\ta^{k_1-1})^*
\end{equation*}
More precisely, we prove by induction that the word $u_{i,j}=\tt{b}^{\ell_1-1}(\tt{ab}^{\ell_1-1})^i(\tt{ba}^{k_1-1})^j$ is a $\P$-position if and only if $i\geq j$.

If $i=0$ and $j>0$, then there is only one valid move from position $u_{i,j}$ and it leads to position $f$, below.
\begin{equation*}
    u_{0,j} = \tb^{\ell_1-1}(\tb\ta^{k_1-1})^j~\movearrow~f=\ta^{k_1-1}(\tb\ta^{k_1-1})^{j-1}
\end{equation*}
It may be verified that~$f$ is a final position, hence that~$u_{0,j}$ is a $\N$-position, for every~$j>0$.
On the other hand, for every $i\geq 0$ then $u_{i,0}$ is a final position, hence a $\P$-position. 
In other words, the claim is true if $i=0$ or $j=0$.

Now, assume that $i>0$ and $j>0$.
In that case, we denote by~$v$ the following word.
\begin{align*}
v=\tb^{\ell_1-1}(\ta\tb^{\ell_1-1})^{i-1}\ta^{k_1}(\tb\ta^{k_1-1})^{j-1}
\end{align*}
It may be verified that only one move is valid from~$u_{i,j}$,
and that it leads to~$v$. Similarly,
the only move from~$v$ leads to~$u_{i-1,j-1}$.
Therefore, words $u_{i,j}$ and $u_{i-1,j-1}$ have the same outcome and by induction hypothesis $u_{i-1,j-1}$ is a $\P$-position if and only if $i-1 \geq j-1$, which concludes the induction.
%
\end{proof}

\subsection{Context-freenes for $\{\ta^{k},\tb^{\ell}\}$}\label{sec:22}

We have seen with Theorem~\ref{the:notregular} that the language made of $\P$-positions is, in general, not regular. Nevertheless, when limited to a rewrite game with two reductions, we get the following result.

\begin{theorem}\label{th:tak,tbl}
Let $k,\ell$ be positive integers. The taking-and-merging game $\{\ta^{k},\tb^{\ell}\}$ has only two Grundy values and the corresponding languages $\mathcal{L}_0$ and $\mathcal{L}_1$ are context-free.
\end{theorem}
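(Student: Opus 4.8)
The claim has two parts: (1) the game $\{\ta^k,\tb^\ell\}$ has only two Grundy values $0$ and $1$; and (2) the languages $\mathcal{L}_0$ and $\mathcal{L}_1$ are context-free. I plan to handle both parts simultaneously by exhibiting an explicit description of $\mathcal{L}_0$ and $\mathcal{L}_1$, then checking it against Lemma~\ref{lem:Grundy}. Since each move removes exactly $k$ letters $\ta$ or exactly $\ell$ letters $\tb$ (possibly merging two blocks of the other letter, but never destroying the other letter), the total count $|w|_\ta$ changes by $0$ or $-k$, and $|w|_\tb$ changes by $0$ or $-\ell$ at each move. This strongly suggests that the Grundy value depends only on the residues $|w|_\ta \bmod k$ and $|w|_\tb \bmod \ell$ together with some coarse structural information about whether moves are available — exactly as in the $\{\ta^2,\tb\}$ example, where an affine invariant modulo $4$ did the job.

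\textbf{The candidate partition.} First I would identify the final (irreducible) positions: a word is final iff it contains no factor $\ta^k$ and no factor $\tb^\ell$, i.e.\ all maximal blocks of $\ta$'s have length $<k$ and all maximal blocks of $\tb$'s have length $<\ell$. Call this language $F$; it is regular, hence context-free. Every final position is a $\P$-position, so $F\subseteq\mathcal{L}_0$. For non-final positions, I expect the Grundy value to be controlled by a two-dimensional analogue of the invariant $S$: something like $S(w) = (|w|_\ta \bmod k, |w|_\tb \bmod \ell)$, together with the observation that from any non-final position one can reach a word in the residue class forced by a $\ta^k$-move or a $\tb^\ell$-move. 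The natural guess, mirroring the example, is that $\mathcal{L}_1$ consists of the non-final words in a suitable union of residue classes and $\mathcal{L}_0$ of all the rest; I would pin down the exact union by a small case analysis (starting from $k=\ell=1$, then $k=2,\ell=1$, etc.) and then conjecture the general form. The key structural point to verify is that from a non-final word, a $\ta^k$-move and a $\tb^\ell$-move are both available whenever the corresponding letter-count is large enough, and that applying one changes the pair of residues predictably while possibly (but this is harmless) merging two opposite-letter blocks.

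\textbf{Verifying via Lemma~\ref{lem:Grundy}.} Once the partition $\mathcal{M}_0,\mathcal{M}_1$ is written down, I would check the two properties of Lemma~\ref{lem:Grundy}. \emph{Stability:} a move from $\mathcal{M}_i$ lands outside $\mathcal{M}_i$ — this follows because a move changes $|w|_\ta$ by $0$ or $-k$ and $|w|_\tb$ by $0$ or $-\ell$, so the residue pair is unchanged, and one checks that the move must toggle the "non-final in the special union" condition (the merging of blocks is what can flip whether a word is final, and this needs care). \emph{Absorption:} from any word in $\mathcal{M}_1$ there is a move to $\mathcal{M}_0$ — this is where I would use that a non-final word has a block long enough to apply a reduction, and that at least one of the two reduction types leads into $\mathcal{M}_0$. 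Finally, context-freeness: $\mathcal{L}_0$ and $\mathcal{L}_1$, once described as "words with given residues mod $k$ and mod $\ell$, intersected with (or minus) the regular language $F$", are Boolean combinations of the regular language $F$ and languages of the form $\{w : |w|_\ta \equiv r \pmod k,\ |w|_\tb \equiv s \pmod\ell\}$. The latter are regular as well (a DFA tracking both residues), so in fact both languages would come out regular — which is consistent with the theorem, since regular implies context-free. If the true answer is genuinely only context-free and not regular, then the residue invariant must fail to be exactly right, and the correct description of $\mathcal{L}_0$ would instead involve comparing block lengths (something like a pushdown-stack condition counting a surplus of $\ta$'s over $\tb$'s as in Theorem~\ref{the:notregular}'s language $L$); in that case I would build an explicit PDA or context-free grammar for $\mathcal{L}_0$ and argue correctness against Lemma~\ref{lem:Grundy} the same way.

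\textbf{Main obstacle.} The delicate point is the interaction between merging and finality: removing a full block $\ta^k$ (when the block has length exactly $k$) glues two $\tb$-blocks together, which may create a new $\tb^\ell$ factor — so the "is this word final / does it have an available move of each type" bookkeeping is not purely a function of residues, and the honest description of the partition may require tracking block-length information, which is precisely what forces context-freeness rather than regularity. Getting this bookkeeping exactly right — and proving the stability property in the boundary cases where a single move simultaneously changes residues and toggles availability of the other move type — is the step I expect to consume most of the work.
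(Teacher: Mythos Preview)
Your residue-based approach has a genuine gap: the pair $(|w|_\ta \bmod k,\,|w|_\tb \bmod \ell)$ does \emph{not} determine the Grundy value, even together with whether the word is final. For $k=\ell=2$, compare $w_1=\ta\tb\ta$ and $w_2=\ta\ta\tb$: both have $|w|_\ta=2$, $|w|_\tb=1$, but $w_1$ is final (hence $\g(w_1)=0$) while $w_2\movearrow\tb$ in one move (hence $\g(w_2)=1$). You noticed yourself that a move leaves the residue pair unchanged, so any partition based on residues alone is trivially stable in the wrong sense --- both players stay in the same class forever --- and cannot separate $\P$ from $\N$. Your ``Main obstacle'' paragraph correctly senses that block structure, not letter counts, is what matters, but the fallback plan (``build a PDA and argue via Lemma~\ref{lem:Grundy}'') is missing the idea that would let you \emph{define} the partition in the first place.

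The missing idea is \emph{confluence}. The rewrite system $\{\emove{\ta^k},\emove{\tb^\ell}\}$ is locally confluent (if two reductions apply to $u$, the results have a common reduct in at most one further step) and terminating, so by Newman's Lemma every word $u$ has a \emph{unique} normal form $w$. Since each reduction removes exactly $k$ $\ta$'s or exactly $\ell$ $\tb$'s, the number of reductions in \emph{any} play from $u$ to $w$ is the fixed integer $\frac{|u|_\ta-|w|_\ta}{k}+\frac{|u|_\tb-|w|_\tb}{\ell}$. Hence the game length is determined by the starting position alone, the Grundy value is just its parity (so only two values occur), and a deterministic pushdown automaton that greedily simulates reductions left-to-right --- stacking the current unfinished block and popping whenever it reaches length $k$ or $\ell$ --- computes that parity. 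This is the paper's argument; once you have confluence, both parts of the theorem fall out immediately, and the delicate stability/absorption case analysis you were anticipating never arises.
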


\begin{proof}
%
%
%
The rewrite system $\{\emove{\ta^{k}}, \emove{\tb^{\ell}}\}$ is \emph{weakly confluent},
that is, if~$\move{u}{v_1}$ and~$\move{u}{v_2}$, then there exists a~$w$
such that~$v_1\movearrow^* w$ and~$v_2\movearrow^* w$ (in our case,~$w$ can be reached in at most one step).
Since moreover, this rewriting system is \emph{terminating} (i.e., there is no infinite rewriting chain), Newman's Lemma \cite{Terese:2003} yields that the rewriting system is \emph{confluent} or, stated otherwise, from any position~$u$ can be reached a unique final position.

Let $u$ be a word and~$w$ be the unique final position reachable from~$u$.
If $|u|_\ta=n$, $|u|_\tb=m$, there exists $\alpha,\beta\ge 0$ such that $|w|_\ta=n-\alpha\, k$ and $|w|_\tb=m-\beta\, \ell$. This means that the reduction $\emove{\ta^k}$ (resp.\@ $\emove{\tb^\ell}$) has been applied $\alpha$ (resp.\@ $\beta$) times in a sequence of  $\alpha+\beta$ reductions. 
Hence, playing the game starting from~$u$ necessarily consists of $\alpha+\beta$ moves. Consequently~$u$ is a $\P$-position (resp.~a $\N$-position) if and only if $\alpha+\beta$ is even (resp.~odd)

To compute the Grundy value of a word, one just has to apply all the possible reductions in any order and count the parity of the number of applied reductions. This can be computed by a push-down automata: reading the word from left to right, each time there are $k$ consecutive letters $\ta$ or $\ell$ consecutive letters $\tb$, a reduction is simulated and the parity changed. Let us define more formally this push-down automata. It has with three states: $0,1$ and an initial state $q_0$. 
The stack alphabet is 
$$\{ (\ta,1),\ldots,(\ta,k-1),(\tb,1),\ldots,(\tb,\ell-1),\perp\}$$
where $\perp$ is a special symbol to represent the bottom of the stack. 
Transitions are of the form 
$$(i,x,y,z,j)$$
where $i,j\in\{0,1\}$ are states, $x$ is the symbol read by the automata, $y$ is the symbol that is popped from the top of the stack, $z$ is the word that is then pushed on the stack (with the usual convention that the leftmost symbol is on top of the stack). 

First, there is a unique transition leaving the initial states; it initializes the stack with the bottom symbol~$\bot$ without reading any letter from the input:
\begin{equation*}
(q_0,\varepsilon,\varepsilon,\perp,0)~.
\end{equation*}
Second, the transition table for states $q\in\{0,1\}$ is given in Table~\ref{tab:transitions}.

\begin{table}[ht]\arraycolsep=5pt\def\arraystretch{1.3}\centering
\begin{tabular}{cccccl}
\text{source}& \text{input} &  \text{popped}& \text{pushed} & \text{target} \\[-0.5em]
\text{state}& \text{letter} & \text{symbol}& \text{symbols}& \text{state} \\
\cline{1-5}
($q$, &$\ta$, &$\perp$,        &$(\ta,1)\perp$,   &$q$)   & \\
($q$, &$\tb$, &$\perp$,        &$(\tb,1)\perp$,   &$q$)   & \\
($q$, &$\ta$, &$(\tb,j)$,      &$(\ta,1)(\tb,j)$, &$q$)   & for each $j<\ell$\\
($q$, &$\tb$, &$(\ta,j)$,      &$(\tb,1)(\ta,j)$, &$q$)   & for each $j<k$\\
($q$, &$\ta$, &$(\ta,i)$,      &$(\ta,i+1)$,      &$q$)   & if $i<k-1$\\
($q$, &$\tb$, &$(\tb,i)$,      &$(\tb,i+1)$,      &$q$)   & if $i<\ell-1$\\
($q$, &$\ta$, &$(\ta,k-1)$,    &$\varepsilon$,    &$1-q$) & \\
($q$, &$\tb$, &$(\tb,\ell-1)$, &$\varepsilon$,    &$1-q$) &
\end{tabular}
\caption{Transition table for states~$q\in\{0,1\}$}
\label{tab:transitions}
\end{table}

For each of these transitions, observe that a symbol has to be popped from the stack. 
We store on the stack the blocks of letters that are were read but not consumed: note that symbol $(\ta,5)$ means a block of five~$\ta$'s.
If a block of $k$ contiguous~$\ta$'s is found,
that is if we read~$\ta$ from the input and that $(\ta,k-1)$ is the symbol on top of the stack, we apply~$\emove{\ta^{\ell}}$, effectively popping $(\ta,k-1)$ from the stack. Moreover, the automaton goes into the other state (from~$0$ to~$1$ or~$1$ to~$0$).
A similar transition is taken when a block of~$\ell$ contiguous~$\tb$'s is found.
In all other cases, the stack is simply updated without changing the state.
When the input is entirely read, the state of the automaton is the parity of the number of reductions that have been applied. 
We disregard the final content of the stack; it is the final position of the game.
\end{proof}

\begin{remark}\label{rk:akb}
In the above result, when $k$ or $\ell$ is equal to $1$, the two languages $\mathcal{L}_0$ and $\mathcal{L}_1$ are regular. Indeed, the stack is not needed in that case. 

Assume that $k>1$ and $\ell=1$. Since the order of the moves does not matter, we may assume that all the moves $\emove{\tb}$ are played first. 
Then the word contains only letters $\ta$ and the rule $\emove{\ta^k}$ is played until a position $\ta^i$ with $i<k$ is reached. Thus, the number of moves from a starting position $u$ is $|u|_{\tb}+\left\lfloor \frac{|u|_{\ta}}{k}\right\rfloor$ and the Grundy value is the parity of this number. This can easily be computed by a DFA. In Figure~\ref{fig:dfa31}, we have represented the DFA for the game $\{\ta^3,\tb\}$. (The integers in the states are the Grundy values.)
\begin{figure}[ht!b]
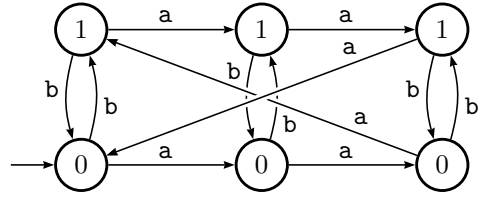

\begin{center}
\VCDraw{%
\begin{VCPicture}{(-1,1)(9,6)}
\LargeState
 \State[0]{(0,2)}{0}
 \State[0]{(4,2)}{1}
 \State[0]{(8,2)}{2}
 \State[1]{(0,5)}{3}
 \State[1]{(4,5)}{4}
 \State[1]{(8,5)}{5}
\Initial[w]{0}
\EdgeL{0}{1}{\ta}
\EdgeL{1}{2}{\ta}
\EdgeL{3}{4}{\ta}
\EdgeL{4}{5}{\ta}
\ArcR{0}{3}{\tb}
\ArcR{3}{0}{\tb}
\ArcR[.2]{1}{4}{\tb}
\ArcR[.2]{4}{1}{\tb}
\ArcR{2}{5}{\tb}
\ArcR{5}{2}{\tb}
\EdgeBorder
\EdgeR[.2]{2}{3}{\ta}
\EdgeR[.2]{5}{0}{\ta}
\EdgeBorderOff
\end{VCPicture}
}
\end{center}
\caption{The DFA computing the Grundy values of $\{\ta^3,\tb\}$.}\label{fig:dfa31}
\end{figure}

\end{remark}

\begin{remark}
    The proof of Theorem~\ref{th:tak,tbl} generalizes to any $n$-letter game of the form~$\{a_1^{k_1},\ldots, a_n^{k_n}\}$.
\end{remark}

\section{Regularity of some games}

In this section, we prove the regularity of some games of the form \begin{equation*}
G=\{\ta^{k_1},\ta^{k_2},...,\ta^{k_n},\tb\}~.
\end{equation*}
If the game has only two rules, $\{\ta^{k_1},\tb\}$, the game is trivial: there are only two Grundy values and the two corresponding languages $\mathcal L_0$ and $\mathcal L_1$ are regular (see Remark \ref{rk:akb}).
In the following, we consider games with at least three rules.

\subsection{The game $\{\ta,\ta^{2k+1},\tb\}$}

In the game $\{\ta,\ta^{2k+1},\tb\}$ the only irreducible word is $\varepsilon$ (since $\emove{\ta}$ and $\emove{\tb}$ are moves), and all words $w\in A^*$ can be reduced to it. Let $w$ be a word. We need $|w|_\tb$ reductions of the form $\emove{\tb}$ to get rid of the $\tb$'s. To get rid of all the $\ta$'s, since the reduction rules all involve an odd number of $\ta$, the number of reductions to apply to eliminate the $\ta$'s has the same parity as $|w|_\ta$. Hence, the number of reductions to apply to a word $w$ to obtain $\varepsilon$ is even if and only if $|w|_\ta+|w|_\tb$ is even. 
Let us partition $A^*$ into two sets~$\mathcal{M}_0$ and~$\mathcal{M}_1$; a word~$w$ belongs to~$\mathcal{M}_0$ if~$|w|_\ta+|w|_\tb$ is even and to~$\mathcal{M}_1$ if it is odd. 
Lemma~\ref{lem:Grundy} then yields that~$\mathcal{M}_0$ is the set of $\P$-positions and that~$\mathcal{M}_1$ is the set of $\N$-position.
It can be easily shown that these two languages are regular.

\begin{remark}
The same argument extends to each game whose set of rewriting rules contains $\emove{\ta}$, $\emove{\tb}$ and any number of rules of the form $\emove{\ta^{2k+1}}$ and $\emove{\tb^{2\ell+1}}$.
\end{remark}

\subsection{The game $\{\ta,\ta^2,\tb\}$}

In this section, we prove that for the game $\{\ta,\ta^2,\tb\}$, the language $\mathcal L_i$ of words of Grundy value $i$ is regular for any Grundy value $i$ and we explicitly give a DFA that computes the Grundy values.

Every word in $A^*$ can be uniquely written as
$$w=\ta^{i_0}\tb\ta^{i_1}\tb\cdots \tb\ta^{i_k}$$
where $k\ge 0$ and $i_0,\ldots,i_k\ge 0$. 
With every word $w$ is thus associated a tuple $(i_0,\ldots,i_k)$, with~$k=|w|_b$;
this association is one-to-one.
For $j\in \{0,...,k\}$, let $i_j':=i_j\bmod{3}$. 
For $r \in \{1,2\}$, let ${\alpha_r=\#\{j \mid i_j'=r\}}$ be the number of blocks of $\ta$'s of size $r$ (modulo $3$).
Finally, we define for every word $w$ the quantity 
\begin{equation*}
  S(w)=2k+2\alpha_1+\alpha_2 \bmod 4~.
\end{equation*}
As an example, the word $w=\ta^5\tb^2\ta\tb\ta^2$ has $k=3$, $(i_0,i_1,i_2,i_3)=(5,0,1,2)$ thus $\alpha_1=1$, $\alpha_2=2$ and $S(w)=2$.

\begin{lemma}\label{lem:gtos}
Let $w\in A^*$, the Grundy value of $w$ in the game $\{\ta,\ta^2,\tb\}$ is entirely determined by $S(w)$, i.e., 
$$\g(w)=\left\{
\begin{array}{ll}
0, & \text{if } S(w)=0;\\
1, & \text{if } S(w)=2;\\
2, & \text{if } S(w)=1;\\
3, & \text{if } S(w)=3.\\
\end{array}\right.$$
\end{lemma}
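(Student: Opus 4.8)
The plan is to verify, via Lemma~\ref{lem:Grundy}, that the partition of $A^*$ into the four classes $\mathcal{M}_i = \{w : S(w) = \sigma(i)\}$ (where $\sigma(0)=0$, $\sigma(1)=2$, $\sigma(2)=1$, $\sigma(3)=3$) satisfies the stability and absorption properties, since by that lemma these are then forced to be exactly the Grundy languages $\mathcal{L}_0,\dots,\mathcal{L}_3$. So the whole argument reduces to understanding how a single move changes the quantity $S(w) = 2k + 2\alpha_1 + \alpha_2 \bmod 4$.

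First I would catalogue the effect of each of the three moves on the triple $(k,\alpha_1,\alpha_2)$, hence on $S$. The move $\emove{\tb}$ decreases $k$ by $1$ and merges two adjacent blocks $\ta^{i_j}$ and $\ta^{i_{j+1}}$ into a single block $\ta^{i_j+i_{j+1}}$; its residue mod $3$ is $i_j' + i_{j+1}' \bmod 3$, so I must run through the $3\times 3$ table of $(i_j', i_{j+1}')$ and track the change in $\alpha_1,\alpha_2$. The moves $\emove{\ta}$ and $\emove{\ta^2}$ each act inside one block: $\emove{\ta}$ requires $i_j \ge 1$ and shifts $i_j' $ cyclically $1 \to 0 \to 2 \to 1$, while $\emove{\ta^2}$ requires $i_j \ge 2$ and shifts $i_j'$ cyclically $2 \to 0 \to 1 \to 2$ (here one must be careful: $\emove{\ta^2}$ is available from a block of residue $r$ iff the block has actual length at least $2$, which is automatic when $r = 2$, true when $r=0$ and the block is nonempty, and true when $r=1$ and the block has length $\ge 4$). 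For each case I compute $\Delta S \bmod 4$.

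The key facts I expect to fall out are: (i) \emph{stability}: every legal move changes $S$ by an odd amount mod $4$ when we are in $\{0,2\}$ versus $\{1,3\}$ in the right way — more precisely I expect that from any word with $S(w) \in \{0,2\}$ every move lands in $\{1,3\}$ and vice versa, which gives stability between $\mathcal{M}_0 \cup \mathcal{M}_2$ and $\mathcal{M}_1 \cup \mathcal{M}_3$; and additionally that no move fixes the exact value of $S$, giving stability within each single class; (ii) \emph{absorption}: from a word with $S(w) = \sigma(i)$ and any $j < i$, some move reaches a word with $S = \sigma(j)$. For absorption I would argue by producing explicit moves: e.g. from $S = 3$ (Grundy $3$) I need moves reaching $S \in \{0,2,1\}$; from $S=1$ (Grundy $2$) moves reaching $S \in \{0,2\}$; from $S=2$ (Grundy $1$) a move reaching $S=0$. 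This requires checking that the word is genuinely nonempty in the relevant place and that at least one of $\emove{\ta}$, $\emove{\ta^2}$, $\emove{\tb}$ producing the desired $\Delta S$ is actually applicable; the delicate subcase is again whether $\emove{\ta^2}$ is available, and whether $\emove{\tb}$ is available (needs $k \ge 1$).

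The main obstacle will be the absorption property in the corner cases where the word is very short or has a restricted structure — for instance a word of the form $\ta^{i_0}$ with small $i_0$, or a word with $k \ge 1$ but all blocks empty ($\ta^0\tb\ta^0\cdots$), where the only available move is $\emove{\tb}$ and one must check that its single effect on $S$ still suffices to reach all required lower classes. I would handle this by first disposing of the (finitely many types of) degenerate words by a direct case analysis, then treating the generic case where enough moves are available using the residue tables computed above. Once stability and absorption are established, Lemma~\ref{lem:Grundy} immediately identifies $\mathcal{M}_i$ with $\mathcal{L}_i$, proving $\g(w)$ is the claimed function of $S(w)$, and in particular that there are only four Grundy values and each $\mathcal{L}_i$ is regular (being defined by a congruence condition computable by a finite automaton).
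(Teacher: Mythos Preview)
Your plan is essentially the paper's own proof: define $\mathcal{M}_i=S^{-1}(\sigma(i))$, tabulate the effect of each rule on $S$, and verify the two conditions of Lemma~\ref{lem:Grundy}. One caution: your expectation that every move sends $\{0,2\}\leftrightarrow\{1,3\}$ (i.e., that $\Delta S$ is always odd) is \emph{false}---for instance $\emove{\ta}$ on a block of residue~$1$ gives $\Delta S\equiv -2$, and $\emove{\ta^2}$ on a block of residue~$0$ gives $\Delta S\equiv +2$. What actually holds (and is all that stability requires) is the weaker fact you also state, namely $\Delta S\not\equiv 0\pmod 4$; once you compute the tables you will see this and the argument goes through unchanged. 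The degenerate cases you worry about are handled just as you suggest: when $S$ is odd one has $\alpha_2$ odd, so some block of residue~$2$ exists and both $\emove{\ta}$ and $\emove{\ta^2}$ are applicable there; when $S=2$ and no block of residue~$1$ exists, a short parity count forces $k\ge 1$, so $\emove{\tb}$ is available.
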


\begin{proof}
The proofs consists in showing that the conditions of Lemma~\ref{lem:Grundy} are met by the following family of languages: $\mathcal M_0=S^{-1}(0)$, $\mathcal M_1=S^{-1}(2)$, $\mathcal M_2=S^{-1}(1)$, $\mathcal M_3=S^{-1}(3)$ and $\mathcal{M}_i=\emptyset$, for each~$i>3$.

First, let us show that playing any move changes the value of $S(w)$ modulo~4. Let $w\in A^*$ and consider each rule. 

\begin{itemize}
  \item If the rule $\emove{\ta}$ is played on a block $\ta^{i_r}$, then $S(w)$ decreases by 2 if $i_r'=1$ and increases by 1 if $i_r'\in \{0,2\}$. 
  \item If the rule $\emove{\ta^2}$ is played, on a block $\ta^{i_r}$, then $S(w)$ increases by 2 if $i_r'=0$,  by 1 if $i_r'=1$ and  decreases by 1 if $i_r'=2$. 
  \item Finally, assume that the rule $\emove{\tb}$ is played. Let $\ta^{i_m}$ and $\ta^{i_{m+1}}$ be the two blocks around the $\tb$ that will be removed. Table \ref{tab:rulebaa2} gives, for every value of $i_m'$ and $i_{m+1}'$, the variation of $S(w)$ modulo $4$.

\begin{table} 
   \begin{center}
\begin{tabular}{|c|c|c|c|}
\hline
\diagbox{$i'_m$}{$i'_{m+1}$} & 0 & 1 & 2 \\ \hline 
0 & -2 & -2 & -2\\ \hline
1 & -2 & -1 & -1 \\ \hline
2 & -2 & -1 & -2 \\ \hline
\end{tabular}
\end{center}
\caption{\label{tab:rulebaa2} Variation of $S(w)$ when the rule $\emove{\tb}$ 
is applied to a $\tb$ between two blocks of $\ta$'s respectively of length $i'_m$ and $i'_{m_+1}$ modulo 4.}
\end{table}
 
 As an example, consider the case $i_m'=i_{m+1}'=1$. Then one $\tb$ and two blocks of size $1$ (modulo $3$) are lost, decreasing the value of $S(w)$ by $6$, but we obtain a new block of size $2$. Thus the total value $S(w)$ decreases by $5$ which is congruent to $1$ modulo $4$. Note that if $i_m=0$ (respectively $i_{m+1}'=0$), then the number of blocks of $\ta$ of size $1$ and $2$ do not change modulo 3 and only one $\tb$ is removed, decreasing by $2$ the value $S(w)$.
\end{itemize}

Now, let us prove that $S(w)>0$, there is a move to $w'$ with $S(w')=0$.
If $S(w)$ is odd, $\alpha_2$ is also odd and in particular, there must be a block $\ta^{i_m}$ with $i_m'=2$.
Then playing $\emove{\ta^2}$ if $S(w)=1$ or $\emove{\ta}$ if $S(w)=3$ on this block leads to a word $w'$ with $S(w')=0$.
Thus assume that $S(w)=2$. If there is a block $\ta^{i_m}$ with $i_m'=1$ then playing the rule $\emove{\ta}$ on this block decreases $S(w)$ by $2$. Otherwise, there must be at least one $\tb$. Then, using Table \ref{tab:rulebaa2}, removing any $\tb$ decreases $S(w)$ by $2$ since the blocks around $\tb$ have size $0$ or $2$ modulo $3$.

If $S(w)\in \{1 ,3\}$, then there is a move to a word $w'$ with $S(w')=2$.
Indeed, as before, there must be a block $\ta^{i_m}$ with $i_m'=2$.
Then playing $\emove{\ta}$ if $S(w)=1$ or $\emove{\ta^2}$ if $S(w)=3$ on this block leads to a word $w'$ with $S(w')=2$.

Finally, if $S(w)=3$, there is a move to a word $w'$ with $S(w')=1$. We do the same reasoning than before to find a move from $S(w)=2$ to $S(w')=0$. If there is a block of size $1$ or a $\tb$ next to a block of size $0$, we remove the block of size $1$ or $\tb$. If not, we remove any $\tb$ between two blocks of size $2$.

Hence, the conditions of Lemma~\ref{lem:Grundy} are indeed met by the following family of languages: $\mathcal M_0=S^{-1}(0)$, $\mathcal M_1=S^{-1}(2)$, $\mathcal M_2=S^{-1}(1)$, $\mathcal M_3=S^{-1}(3)$ and $\mathcal{M}_i=\emptyset$, for each~$i>3$.
\end{proof}


\begin{theorem}
The Grundy values of the game $\{\ta,\ta^2,\tb\}$ can be computed by a DFA.
\end{theorem}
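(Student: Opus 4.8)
The plan is to show that the quantity $S(w)$ defined just before Lemma~\ref{lem:gtos} can itself be tracked by a DFA, and then to compose this with the translation table of Lemma~\ref{lem:gtos} to read off the Grundy value. The key observation is that $S(w) = 2k + 2\alpha_1 + \alpha_2 \bmod 4$, where $k = |w|_\tb$ and $\alpha_r$ counts the blocks of $\ta$'s whose length is $\equiv r \pmod 3$, and all three ingredients ($k \bmod 2$, $\alpha_1 \bmod 2$, $\alpha_2 \bmod 2$ — note $2\alpha_1$ and $2k$ only matter modulo~$4$, hence $\alpha_1$ and $k$ only matter modulo~$2$, while $\alpha_2$ matters modulo~$4$) can be updated letter by letter when scanning $w$ from left to right.

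First I would describe the state set. The automaton needs to remember: (i) the residue modulo~$3$ of the length of the current (possibly empty) block of $\ta$'s being read — call it $p \in \{0,1,2\}$; (ii) the contribution to $S$ already committed by all completed blocks of $\ta$'s and by all $\tb$'s read so far. For (ii), since $2k \bmod 4$ needs $k \bmod 2$, since $2\alpha_1 \bmod 4$ needs $\alpha_1 \bmod 2$, and $\alpha_2 \bmod 4$ is needed, a single counter $s \in \mathbb{Z}/4\mathbb{Z}$ recording the partial sum $2k' + 2\alpha_1' + \alpha_2' \bmod 4$ (primes denoting the counts restricted to completed blocks and to the $\tb$'s) suffices. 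So the state set is $\{0,1,2\} \times \mathbb{Z}/4\mathbb{Z}$, with initial state $(0,0)$, giving $12$ states.

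Next I would specify the transitions. Reading $\ta$ from state $(p,s)$ goes to $(p+1 \bmod 3, s)$: the current block simply grows, and nothing is committed yet. Reading $\tb$ from state $(p,s)$ closes the current block (whose length residue is $p$), committing $2$ to $s$ if $p=1$, $1$ if $p=2$, and $0$ if $p=0$, and then also commits the $+2$ for the new $\tb$; so it goes to $(0, s + 2 + [p=1]\cdot 2 + [p=2]\cdot 1 \bmod 4)$. The accepting behaviour: when the input ends in state $(p,s)$, the value of $S(w)$ is $s + [p=1]\cdot 2 + [p=2]\cdot 1 \bmod 4$ (we must still commit the final, trailing block of $\ta$'s, which is not followed by a $\tb$). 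One then checks by a straightforward induction on $|w|$ that this final computed value equals $S(w)$ exactly — this is essentially just bookkeeping that $2k$, $2\alpha_1$, $\alpha_2$ are each incremented correctly. Finally, we assign Grundy values to states: a state $(p,s)$ yielding final value $\sigma = s + [p=1]\cdot2 + [p=2]\cdot1 \bmod 4$ is labelled $0,1,2,3$ according as $\sigma = 0,2,1,3$, exactly as in Lemma~\ref{lem:gtos}. By that lemma, the label read at the end of the scan is precisely $\g(w)$.

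The only mildly delicate point — and it is not really an obstacle — is the treatment of empty blocks ($i_j = 0$): a run of two consecutive $\tb$'s must commit the $+2$ for each $\tb$ but add nothing for the empty block between them, and the leading/trailing empty blocks (when $w$ starts or ends with $\tb$, or $w = \varepsilon$) must likewise contribute $0$; the transitions above handle this automatically because $p = 0$ contributes $0$. I would close by remarking that the resulting $12$-state DFA can of course be minimized, but that its mere existence already proves the theorem, and that the same scheme — a mod-$m$ counter per block residue class plus a global mod-$M$ accumulator — is exactly what will be reused for the other games treated in this section.
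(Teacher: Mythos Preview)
Your proposal is correct and follows essentially the same approach as the paper: invoke Lemma~\ref{lem:gtos} and then build a $12$-state DFA with state set $\{0,1,2,3\}\times\{0,1,2\}$ tracking $S(w)\bmod 4$ together with the length of the current $\ta$-block modulo~$3$. The only cosmetic difference is that the paper updates the $S$-component eagerly on each $\ta$ (so the state records $S(w)$ directly, with $\ta$ adding $+2$ if $i=0$ and $-1$ if $i\in\{1,2\}$, and $\tb$ adding $-2$), whereas you defer the block's contribution until it is closed by a $\tb$ or by end-of-input; the two conventions are interchangeable and yield isomorphic automata.
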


\begin{proof}
By Lemma \ref{lem:gtos}, we just need to compute the value $S(w)$. This is done by the automaton depicted in Figure \ref{fig:aut121}.
There are 12 states. A state is denoted by $(s.i)$ where $s$ is the value $S(w)$ and $i$ is the size modulo 3 of the last block of $\ta$ of $w$. Reading $\tb$ from a state $(s.i)$ leads to state $(s-2.0)$ (values are taken modulo 4 for $s$ and modulo 3 for $i$). Reading $\ta$ from a state $(s.i)$ leads to state $(s'.(i+1))$ with $s'=s+2$ if $i=0$, $s'=s-1$ if $i\in \{1,2\}$.

\end{proof}

\begin{figure}[htbp]
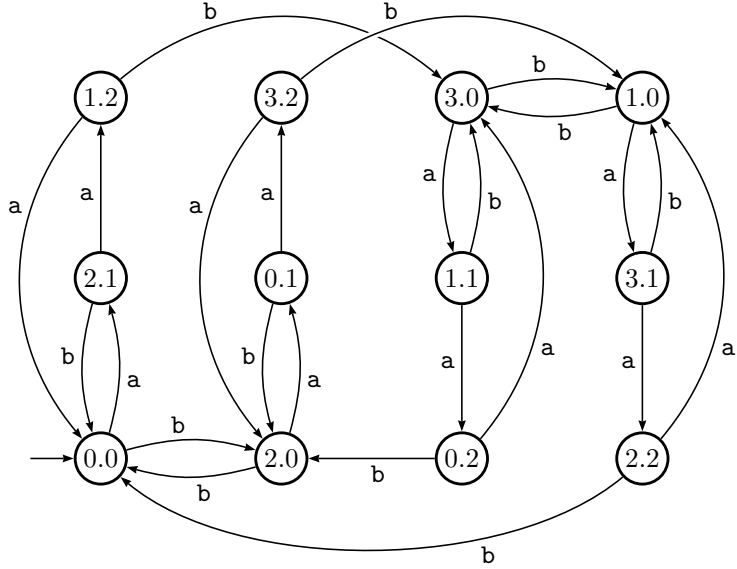

        \centering
\VCDraw{%
        \begin{VCPicture}{(-1,-3)(13,10)}
\LargeState
 \State[0.0]{(0,0)}{0}
 \State[2.1]{(0,4)}{9}
 \State[1.2]{(0,8)}{6}
 \State[2.0]{(4,0)}{8}
 \State[0.1]{(4,4)}{1}
 \State[3.2]{(4,8)}{14}
 \State[0.2]{(8,0)}{2}
 \State[1.1]{(8,4)}{5}
 \State[3.0]{(8,8)}{12}
 \State[2.2]{(12,0)}{10}
 \State[3.1]{(12,4)}{13}
 \State[1.0]{(12,8)}{4}
 
\Initial[w]{0}
\ArcR{0}{9}{\ta}
\ArcR{9}{0}{\tb}
\EdgeL{9}{6}{\ta}
\VArcR[.3]{arcangle=-40,ncurv=.8}{6}{0}{\ta}
\ArcR{8}{1}{\ta}
\ArcR{1}{8}{\tb}
\EdgeL{1}{14}{\ta}
\VArcR[.3]{arcangle=-40,ncurv=.8}{14}{8}{\ta}
\ArcL{0}{8}{\tb}
\ArcL{8}{0}{\tb}
\VArcR[.3]{arcangle=-40,ncurv=.8}{2}{12}{\ta}
\EdgeR{5}{2}{\ta}
\ArcR{5}{12}{\tb}
\ArcR{12}{5}{\ta}
\VArcR[.3]{arcangle=-40,ncurv=.8}{10}{4}{\ta}
\EdgeR{13}{10}{\ta}
\ArcR{13}{4}{\tb}
\ArcR{4}{13}{\ta}
\ArcL{4}{12}{\tb}
\ArcL{12}{4}{\tb}
\EdgeL{2}{8}{\tb}
\VArcL[.3]{arcangle=40,ncurv=.6}{10}{0}{\tb}
\VArcL[.3]{arcangle=40,ncurv=.8}{6}{12}{\tb}
\EdgeBorder
\VArcL[.3]{arcangle=40,ncurv=.8}{14}{4}{\tb}
\end{VCPicture}
}
\caption{A DFA for the game $\{\ta,\ta^2, \tb\}$.}\label{fig:aut121}
    \end{figure}

What could happen if we just replace the rule $\emove{\ta^2}$ by $\emove{\ta^4}$? Surprisingly, we did not find an automaton for the game $\{\ta,\ta^4, \tb\}$ even though the Grundy values of this game seem to be bounded as suggested by computer experiments. For words of length at most 20, the Grundy function is bounded by~$3$. This leads to the following open question.

\begin{question}
Are the Grundy values of the game $\{\ta,\ta^4, \tb\}$ bounded? Are the corresponding sets regular?
\end{question}

\subsection{The game $\{\ta, \ta^2, \ta^3, \tb\}$}

We now prove that for the game $\{\ta,\ta^2,\ta^3,\tb\}$, the corresponding sets $\mathcal L_i$ are again regular and give a DFA that computes the Grundy values. 
As before, every word in $A^*$ can be written as
$$w=\ta^{i_0}\tb\ta^{i_1}\tb\cdots \tb\ta^{i_k}$$
where $k=|w|_{\tb}\ge 0$ and $i_0,\ldots,i_k\ge 0$. 
We now write $i_j':=i_j\bmod{4}$. 
For $r \in \{1,2,3\}$, let $\alpha_r=\#\{j \mid i_j'=r\}$ be the number of blocks of size $r$ modulo 4.
Finally, we define for any word the triplet of $\{0,1\}^3$.
$$S(w)=(k+\alpha_1 \bmod 2,\alpha_2 \bmod 2,\alpha_3 \bmod 2).$$

For convenience reasons, we will denote the triplet $S(w)=(x,y,z)$ by the word $xyz$. 
As an example, the word $w=\ta^5\tb\ta^3\tb\tb\ta^2\tb\ta$ has $k=4$, $\alpha_1=2$, $\alpha_2=\alpha_3=1$ and thus $S(w)=011$. As before, the value $S(w)$ is enough to compute the Grundy values.

\begin{lemma}\label{lem:a123b}
Let $w\in A^*$, the Grundy value of $w$ in the game $\{\ta,\ta^2,\ta^3,\tb\}$ is determined by $S(w)$, i.e., 
$$\g(w)=\left\{
\begin{array}{ll}
0, & \text{if } S(w)\in\{000, 111\};\\
1, & \text{if } S(w)\in\{011, 100\};\\
2, & \text{if } S(w)\in\{010, 101\};\\
3, & \text{if } S(w)\in\{001, 110\}.\\
\end{array}\right.$$
\end{lemma}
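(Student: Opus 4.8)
The plan is to mimic the structure of the proof of Lemma~\ref{lem:gtos}: I will show that the family of languages $\mathcal{M}_0 = S^{-1}(\{000,111\})$, $\mathcal{M}_1 = S^{-1}(\{011,100\})$, $\mathcal{M}_2 = S^{-1}(\{010,101\})$, $\mathcal{M}_3 = S^{-1}(\{001,110\})$, and $\mathcal{M}_i = \emptyset$ for $i>3$, satisfies the stability and absorption properties of Lemma~\ref{lem:Grundy}; uniqueness in that lemma then identifies these with the true Grundy languages. Observe first that the eight values of $\{0,1\}^3$ are split into four pairs, each pair differing by the all-ones vector $111$; I would note that these pairs are exactly the cosets of the subgroup $\{000,111\}$ in $(\mathbb{Z}/2\mathbb{Z})^3$, so the Grundy value of $w$ is a function of the coset of $S(w)$, i.e.\ of $S(w)$ modulo adding $111$. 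This reformulation will make the bookkeeping cleaner.

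For the stability property I must check that every legal move changes the coset of $S(w)$, equivalently changes $S(w)$ by a vector \emph{not} in $\{000,111\}$. I would go rule by rule. For $\emove{\ta}$ applied to a block of length congruent to $i'$ mod $4$: the count $\alpha_{i'}$ decreases by one and $\alpha_{i'-1}$ increases by one (with the conventions that length-$0$ blocks and the transitions in/out of residue $0$ are handled separately, and $k$ is unchanged), so $S(w)$ changes by $e_{i'} + e_{i'-1}$ in the appropriate coordinates — I would tabulate this for $i'=1,2,3,0$ and confirm each resulting change vector lies outside $\{000,111\}$. The same is done for $\emove{\ta^2}$ (change $\alpha_{i'}$ and $\alpha_{i'-2}$) and $\emove{\ta^3}$ (change $\alpha_{i'}$ and $\alpha_{i'-3} = \alpha_{i'+1}$). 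Finally for $\emove{\tb}$ merging blocks of residues $i'_m$ and $i'_{m+1}$ into one of residue $(i'_m+i'_{m+1}) \bmod 4$: here $k$ drops by one (changing the first coordinate of $S$), and the $\alpha$-counts change according to which of the three nonzero residue classes the two old blocks and the one new block fall in; I would record this in a $4\times 4$ table analogous to Table~\ref{tab:rulebaa2} and verify that in every cell the net change to $S(w)$ is not $000$ and not $111$. I expect this table to be the most error-prone part, since there are sixteen cases and the residue arithmetic mod $4$ interacts with the special role of residue $0$.

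For the absorption property I must show that from any $w$ with $S(w)$ in the coset assigned Grundy value $i$, and for each $j<i$, there is a move reaching the coset assigned value $j$. Since moving between cosets amounts to adding a fixed nonzero element of $(\mathbb{Z}/2\mathbb{Z})^3/\{000,111\} \cong (\mathbb{Z}/2\mathbb{Z})^2$, and from the stability analysis every move realizes one of a few such shifts, it suffices to exhibit, for each nonzero target shift, a move achieving it — provided the word is not already final. The point is that when $w$ is not the empty word it contains at least one $\ta$ or at least one $\tb$; if it contains a $\tb$, the merge rule gives one family of shifts, and if it contains only $\ta$'s then $w = \ta^n$ with $n>0$ and $S(w)$ is determined by $n \bmod 4$, for which one checks directly that the available taking moves $\emove{\ta}, \emove{\ta^2}, \emove{\ta^3}$ reach all smaller Grundy values (this is essentially the classical Nim-with-moves-$\{1,2,3\}$ computation, whose Grundy sequence is $0,1,2,3,0,1,2,3,\dots$, consistent with the claimed values for $\ta^n$). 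For a general word I would argue that one can always find a block of a suitable residue, or a $\tb$ between blocks of suitable residues, to realize whichever single coset-shift is needed; the case analysis is finite and I would organize it by the current value $S(w)$ exactly as in the proof of Lemma~\ref{lem:gtos}. Once both properties are verified, Lemma~\ref{lem:Grundy} finishes the proof, and as a corollary the $\mathcal{L}_i$ are regular, computed by a DFA tracking $S(w)$ together with the residue mod $4$ of the current trailing block of $\ta$'s.
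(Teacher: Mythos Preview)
Your proposal is correct and follows essentially the same route as the paper's own proof: define the candidate languages $\mathcal{M}_i$ via $S$, verify stability by tabulating the change vector to $S(w)$ under each rule (the paper produces exactly the two tables you anticipate, one for the $\emove{\ta^k}$ rules and one $4\times 4$ table for $\emove{\tb}$, and checks that no entry is $000$ or $111$), and verify absorption by a case analysis on $S(w)$ showing that the needed coset-shifts are always realizable. Your coset-of-$\{000,111\}$ framing is a tidy way to phrase what the paper does implicitly by pairing each $S$-value with its complement; just be careful not to lean on the over-strong claim that \emph{every} nonzero coset-shift is reachable from every non-final word (it isn't, e.g.\ from $w=\ta$), and instead argue, as the paper does, only the shifts actually required for each $i>j$.
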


Note that the values $S(w)$ are paired with their complement

\begin{proof}
We denote by $\mathcal M_i$, $i\in \{0,1,2,3\}$ the potential candidates for $\mathcal L_i$, that are:
\begin{itemize}
\item $\mathcal M_0 = \{ w \in A^* | S(w) \in \{000,111\}\}$;
\item $\mathcal M_1 = \{ w \in A^*  | S(w) \in \{011,100\}\}$;
\item $\mathcal M_2 = \{ w \in A^*  | S(w) \in \{010,101\}\}$;
\item $\mathcal M_3 = \{ w \in A^*  | S(w) \in \{001,110\}\}$.
\end{itemize}

We aim to prove that $\mathcal L_i=\mathcal M_i$. We first list the evolution of $S(w)$ depending on the rule that is played.

\begin{itemize}
  \item The rule $\emove{\ta^k}$ is played on a block $\ta^{i_r}$. Table \ref{tab:ruleai123} gives the vector (in a compact form) that is applied to $S(w)$ in function of the values of $i_r'$ and the rule $\ta^k$. 
  As an example, consider the case $i_r'=2$ and the rule $\emove{\ta^3}$. One block of size $2$ is replaced by a block of size $2-3=3 \bmod {4}$. Thus the vector applied to $S(w)$ is $(0,1,1)$ (values are taken modulo 2)..

\begin{table}[ht]   
\begin{center}
\begin{tabular}{|c|c|c|c|c|}
\hline
\diagbox{$k$}{$i_r'$} & 0 & 1 & 2 & 3\\ \hline 
1 & 001 & 100 & 110 & 011\\ \hline
2 & 010 & 101 & 010 & 101 \\ \hline
3 & 100 & 110 & 011 & 001 \\ \hline
\end{tabular}
\end{center}
\caption{Variation of $S(w)$ with the rules $\emove{\ta^i}$ on a block $\ta^{i_r}$.}\label{tab:ruleai123} 
\end{table}
 
    \item The rule $\emove{\tb}$ is played. Let $\ta^{i_m}$ and $\ta^{i_{m+1}}$ be the two blocks around the $\tb$ that is removed. Table \ref{tab:ruleb123} gives the vector applied to $S(w)$ depending on the values of $i_m'$ and $i_{m+1}'$.
\end{itemize}

\begin{table}[ht]
\centering
\begin{tabular}{|c|c|c|c|c|}
\hline
\diagbox{$i_m'$}{$i_{m+1}'$} & 0 & 1 & 2 & 3 \\ \hline 
0 & 100 & 100 & 100 & 100\\ \hline
1 & 100 & 110 & 011 & 001 \\ \hline
2 & 100 & 011 & 100 & 011 \\ \hline
3 & 100 & 001 & 011 &110 \\\hline
\end{tabular}
\caption{ Variation of $S(w)$ when the rule $\emove{\tb}$ 
is applied to a $\tb$ between two blocks of $\ta$'s respectively of length $i'_m$ and $i'_{m_+1}$ modulo 4.}
\label{tab:ruleb123}
\end{table}

In both cases, there is no variation with vector $000$ or $111$ which proves that all the sets $\mathcal M_i$ are stable. We now prove that for any word in $\mathcal M_i$ there is a move to a word in $\mathcal M_j$ if $i>j$.

First note that, except if $w$ contains only $\ta$'s and an even number of them, it is always possible to change $S(w)$ by either vector $100$ or $011$.
Indeed, consider such a word $w$. If there is a block of $\ta$'s of size $1$ or $3$, then playing $\emove{\ta}$ to any $\ta$ of this block changes the value of $S(w)$ by $100$ or $011$. 
Otherwise, there are only blocks of size $0$ or $2$ (modulo 4), and necessarily one $\tb$. Then playing $\emove{\tb}$ to any $\tb$ changes the value of $S(w)$ by $100$ or $011$ (according to Table \ref{tab:ruleb123}). This remark implies that there is always a move from a word in $\mathcal M_1$ to $\mathcal M_0$ and from a word in $\mathcal M_3$ to $\mathcal M_2$.


Second, we prove that if $w$ belongs to~$\mathcal{M}_2\cup \mathcal{M}_3$, it is always possible to change $S(w)$ by either vector $001$ or vector $110$.  By definition of $\mathcal M_2$ and $\mathcal M_3$, there is always in $w$ either a block of size $2$ or a block of size $3$ modulo 4. Then, using Table~\ref{tab:ruleai123}, playing $\emove{\ta}$ (in the first case) or $\emove{\ta^3}$ (in the second case) changes the value of $S(w)$ with vector $110$ (in the first case) or $001$ (in the second case). This implies that there is always a move from a word in $\mathcal M_2$ to $\mathcal M_1$ and from a word in $\mathcal M_3$ to $\mathcal M_0$.

Third we prove that if $w$ belong to~$\mathcal{M}_2\cup \mathcal{M}_3$, it is always possible to change $S(w)$ by either vector $010$ or vector $101$. As before,  $w\in \mathcal M_2\cup \mathcal M_3$ must contain either a block of size $2$ or a block of size $3$ modulo 4. Then playing $\emove{\ta^2}$ changes the value of $S(w)$ with vector $010$ (in the first case) or $101$ (in the second case). This implies that there is always a move from a word in $\mathcal M_2$ to $\mathcal M_0$ and from a word in $\mathcal M_3$ to $\mathcal M_1$.

Lemma \ref{lem:Grundy} yields that $\mathcal M_i=\mathcal L_i$ for all $i\in \{0,1,2,3\}$.
\end{proof}

\begin{theorem}\label{th:a,aa,aaa,b}
The Grundy values of the game $\{\ta,\ta^2,\ta^3,\tb\}$ can be computed by a DFA.
\end{theorem}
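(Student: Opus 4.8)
The plan is to reduce everything to Lemma~\ref{lem:a123b}: that lemma says the Grundy value of a word~$w$ depends only on $S(w)\in\{0,1\}^3$, so it is enough to build a DFA that computes $S(w)$ and then relabel each of its states by the Grundy value attached to the corresponding value of~$S$.

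First I would pin down what finite information a left-to-right scan must remember. After reading a prefix~$p$, write $p=\ta^{j_0}\tb\ta^{j_1}\tb\cdots\tb\ta^{j_t}$, so that $\ta^{j_t}$ is the block currently being read. I claim the pair $\bigl(S(p),\,j_t\bmod 4\bigr)$ is a sufficient state: the first component is exactly what we ultimately want, and the residue $j_t\bmod 4$ is needed because the contribution of the current block to $\alpha_1,\alpha_2,\alpha_3$ (already folded into $S(p)$) depends precisely on that residue, hence so does the effect of reading one more~$\ta$. This gives a state set $\{0,1\}^3\times\{0,1,2,3\}$ of size~$32$, with initial state $(000,0)$ (for $p=\varepsilon$ one has $k=0$, all $\alpha_r=0$, and an empty current block).

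Next I would write down the transitions and check they are correct against the definitions of $k=|p|_\tb$ and the $\alpha_r$. Reading~$\ta$ from state $(xyz,i)$: the current block grows from length~$i$ to length~$i+1$ modulo~$4$, so exactly one block changes residue class; the new state is $\bigl(x'y'z',\,i{+}1\bmod 4\bigr)$, where $x'y'z'$ is $xyz$ with the vector $100,110,011,001$ added modulo~$2$ according as $i=0,1,2,3$ (for $i=0$ a size-$1$ block appears, so $\alpha_1$ flips; for $i=1$ a size-$1$ block becomes size-$2$, so $\alpha_1,\alpha_2$ flip; for $i=2$ size-$2$ becomes size-$3$, so $\alpha_2,\alpha_3$ flip; for $i=3$ size-$3$ becomes size-$0$, so $\alpha_3$ flips). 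Reading~$\tb$ from state $(xyz,i)$: one separator is added, which increments $k$ and hence flips the first coordinate, and a fresh empty block starts, so the new state is $\bigl(\bar x\,y\,z,\,0\bigr)$ with $\bar x=1-x$. A routine induction on $|p|$ then shows that after reading~$p$ the automaton sits in state $\bigl(S(p),\,j_t\bmod 4\bigr)$; in particular, after reading all of~$w$, the first component is~$S(w)$. Relabelling each state $(xyz,i)$ by the Grundy value assigned to $xyz$ in Lemma~\ref{lem:a123b} yields a DFA computing the Grundy function of $\{\ta,\ta^2,\ta^3,\tb\}$.

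There is no genuine obstacle here: the only points that need care are (i)~isolating the right finite data, namely the triple~$S$ together with the current block length modulo~$4$, and (ii)~verifying the handful of transition rules above against the definitions of $k$ and the~$\alpha_r$. Both become mechanical once one keeps the decomposition $w=\ta^{i_0}\tb\cdots\tb\ta^{i_k}$ in view. If desired, one could prune unreachable states or merge equivalent ones to shrink the automaton below~$32$ states, but that is not needed for the statement.
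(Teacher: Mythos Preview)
Your proposal is correct and follows essentially the same approach as the paper: reduce to Lemma~\ref{lem:a123b} and build a DFA that maintains~$S$ together with the length modulo~$4$ of the current block of~$\ta$'s. The paper's proof is terser (it stores $k,\alpha_1,\alpha_2,\alpha_3$ modulo~$2$ separately rather than directly~$S$, giving~$64$ states instead of your~$32$), but the idea and the invariant are the same; the subsequent remark in the paper then observes, as you do, that the automaton can be shrunk further.
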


\begin{proof}
We construct a DFA that computes the Grundy values of the game $\{\ta,\ta^2,\ta^3,\tb\}$.
By Lemma \ref{lem:a123b}, one just needs to compute the value of $S(w)$, which, by definition of $S(w)$, can be done by an automaton that stores the value of $k$, $\alpha_i$, $i\in \{1,2,3\}$ modulo 2 and the number modulo 4 of $\ta$ in the last block.
\end{proof}

\begin{remark}
In the proof of Theorem~\ref{th:a,aa,aaa,b}, we don't need to maintain~$S(w)$ entirely,
if all we want is the Grundy value. 
Indeed, it is enough to store the Grundy value and the parity of the last block of $\ta$.

This is due to the fact that, with a given parity for the last block of $\ta$'s, 
in order to obtain~$S(wx)$ from~$S(w)$ for some word~$w$ and letter~$x$, we apply some vector or its complement to~$S(w)$.
For instance, if $w$ ends with an odd number of $\ta$'s and that $\ta$ is read, then the vector applied to $S(w)$ is $110$ if $w$ ends with one $\ta$ and $001$ if it ends with $\ta^3$. The automaton computing the Grundy values in this way is depicted in Figure \ref{fig:auta123b}. It has eight states and the label $(g.i)$ in a state indicates that $g$ is the Grundy value and $i$ the parity of the number of $\ta$ in the last block.
As an example, from state $(3.1)$, when reading $\ta$, $S(w)$ changes by vector $110$ or $001$ and thus the Grundy value that was $3$ is now $0$ and there are now an even number of $\ta$. Hence the new state is $0.0$. When reading $\tb$, $S(w)$ changes by vector $100$ and thus the Grundy value is now $2$ and the final letter is $\tb$, thus the new state is $(2.0)$.
\end{remark}

\begin{figure}[htbp]
        \centering
\VCDraw{%
        \begin{VCPicture}{(-4.25,5.25)(15.75,12.75)}
\LargeState
 \State[1.0]{(0,12)}{4}
 \State[0.1]{(3,9)}{1}
 \State[2.1]{(-3,9)}{9}
 \State[3.0]{(0,6)}{12}
 \State[2.0]{(12,6)}{8}
 \State[1.1]{(15,9)}{5}
 \State[3.1]{(9,9)}{13}
 \State[0.0]{(12,12)}{0}
\Initial[n]{0}
\ArcL{4}{1}{\ta}
\ArcL{1}{4}{\tb}
\ArcL{1}{12}{\ta}
\ArcL{9}{4}{\ta}
\ArcL{9}{12}{\tb}
\ArcL{12}{9}{\ta}
\ArcL{0}{5}{\ta}
\ArcL{5}{0}{\tb}
\ArcL{5}{8}{\ta}
\ArcL{13}{0}{\ta}
\ArcL{13}{8}{\tb}
\ArcL{8}{13}{\ta}
\VArcL{arcangle=8,ncurv=.5}{12}{8}{\tb}
\VArcL{arcangle=8,ncurv=.5}{8}{12}{\tb}
\VArcL{arcangle=8,ncurv=.5}{0}{4}{\tb}
\VArcL{arcangle=8,ncurv=.5}{4}{0}{\tb}
\end{VCPicture}%
}
\caption{DFA for the game $\{\ta,\ta^2,\ta^3,\tb\}$.}\label{fig:auta123b}
    \end{figure}

One could hope to show a similar result for each game of the form $\{\ta,\ta^2,...,\ta^k,\tb\}$, by computing the number of $\tb$ and blocks of $\ta$ of size $1$, $2$, $3$, ..., $k-1$ modulo $2$ and finding some invariant for the Grundy values. However, this method already fails for $k=4$ since the word $\ta^2\tb\ta^2$ is a $\P$-position for this game whereas $\tb$ is not. 
We have computed the Grundy values for all words of length up to $23$ for this game and already found $14$ Grundy values:
$$(\max\{\g(u)\})_{|u|=0,1,2,\ldots}=0,1,2,3,4,5,5,6,7,7,7,7,7,8,9,9,10,11,11,12,13,13,13,14$$
This suggests that the automaton, if it exists for this game, is not as simple as it was for $k=2$ or $k=3$.



\section{Does a regular set of $\P$-positions imply regular sets of Grundy values?}

Given a rewrite game, deciding whether each set $\mathcal{L}_i$ forms a regular language remains an open problem in a certain number of cases. Therefore, it seems natural to know whether a positive or a negative answer can be given without considering all the sets. A first step towards this direction has been given by Waldmann, who obtained the following result \cite[Thm.~6]{Waldmann:2002}.

\begin{theorem}[Waldmann, 2002]\label{thm:l0}
For all taking-and-breaking games, if the language $\mathcal{L}_0$ is regular, then the Grundy function is bounded, and all the Grundy languages $\mathcal{L}_i$ are regular.
\end{theorem}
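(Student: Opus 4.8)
The plan is to exploit the characterization of Grundy languages given in Lemma~\ref{lem:Grundy} together with the regularity of $\mathcal{L}_0$, and to proceed by induction on the Grundy value. The key observation is that for a taking-and-breaking game, a move from a position $w$ always acts \emph{locally}: it removes tokens from a single pile and possibly splits that pile in two, so that the set of positions reachable from $w$ in one move is determined by the multiset of pile sizes of $w$ in a controlled way. Because $\mathcal{L}_0$ is regular, there is a DFA $\mathcal{A}_0$ recognizing it; I would first argue that the \emph{outcome} being regular forces a uniform bound on how a single pile can interact with the rest of the word. More precisely, I would fix the automaton $\mathcal{A}_0$, with say $N$ states, and show that the Grundy value of any pile of size $n$, and more generally the "transition behaviour" of a block $\ta^n$ across $\mathcal{A}_0$, is eventually periodic in $n$ with period dividing some function of $N$; this is the analogue of the pumping argument underlying Waldmann's Theorem~\ref{thm:waldmann}.

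The next step is to bound the Grundy function. Suppose for contradiction that the Grundy values are unbounded. Then for every $K$ there is a position $w$ with $\g(w) = K$, and by the $\mex$ definition $w$ has moves to positions of every Grundy value $0, 1, \ldots, K-1$; in particular $w$ has a move to some $w' \in \mathcal{L}_0$. I would use the regularity of $\mathcal{L}_0$ to run a pumping-type argument inside a pile of $w$: if a pile is very long, replacing $\ta^n$ by $\ta^{n+p}$ (for the period $p$ of $\mathcal{A}_0$) does not change membership in $\mathcal{L}_0$ of any word obtained by a fixed local move, and hence — via the stability and absorption properties of Lemma~\ref{lem:Grundy} applied inductively — does not change the Grundy value either. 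This would show that only finitely many "pile-size classes" occur, so that the Grundy value of a position depends only on a bounded amount of information about each pile and on the number of piles modulo something, forcing boundedness of $\g$. Let $K$ be the resulting bound.

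Finally, with $\g$ bounded by $K$, I would prove regularity of each $\mathcal{L}_i$ simultaneously by downward induction, or rather by a single automaton construction. The idea is to build a DFA whose states record, for the prefix read so far, enough data to determine (once the whole word is known) the Grundy value: namely the size of the current (unfinished) pile up to the eventual period, plus a bounded summary of the completed piles — and the point is that since $\mathcal{L}_0$ is regular this summary can be taken to be the state of $\mathcal{A}_0$ together with the Grundy value computed so far, both of which live in finite sets. One then verifies that the resulting partition of $A^*$ satisfies the stability and absorption conditions of Lemma~\ref{lem:Grundy}, using that a local move changes the recorded summary in a way computable from the finite data; uniqueness in Lemma~\ref{lem:Grundy} then identifies this partition with $(\mathcal{L}_i)_i$, giving regularity of each $\mathcal{L}_i$.

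\textbf{Main obstacle.} The delicate point is the second step: turning regularity of $\mathcal{L}_0$ alone into a \emph{uniform periodicity} statement about long piles that is strong enough to bound $\g$ and not merely the outcome. The subtlety is that $\mathcal{L}_0$ only controls Grundy value $0$, so the pumping argument must be bootstrapped — one shows inductively that pumping a long pile preserves $\mathcal{L}_0$-membership of \emph{all} one-move descendants, hence preserves the $\mex$, hence preserves Grundy value $1$, and so on — and one must be careful that the constants (periods, thresholds) do not blow up with the Grundy level, which is exactly where the "never split into more than two piles" hypothesis of octal/taking-and-breaking games and the locality of moves are essential.
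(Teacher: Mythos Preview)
The paper does not contain a proof of this theorem: it is quoted as a result of Waldmann (reference \cite{Waldmann:2002}, Theorem~6) and used as motivation, with no argument supplied here. So there is no ``paper's own proof'' to compare your proposal against.

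That said, your sketch misses the structural feature that makes Waldmann's result go through cleanly, and as written the bootstrapping step is a genuine gap. A taking-and-breaking game is a \emph{disjunctive sum}: distinct piles never interact, so the Grundy value of a position is the \textsc{xor} of the Grundy values $G(n_1),\ldots,G(n_r)$ of its piles. Consequently one never needs to pump ``inside a pile while tracking all one-move descendants'' as you propose. Instead, intersect $\mathcal{L}_0$ with the regular language $\tb\ta^*\tb\ta^*\tb$ of two-pile positions: the resulting regular language is exactly $\{\tb\ta^m\tb\ta^n\tb : G(m)=G(n)\}$, and a direct pumping/Myhill--Nerode argument on this set shows that the single-pile Grundy sequence $(G(n))_{n\ge 0}$ is eventually periodic, hence bounded. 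Once $(G(n))_n$ is eventually periodic, Theorem~\ref{thm:waldmann} immediately gives that there are finitely many nonempty $\mathcal{L}_i$ and that each is regular. Your inductive scheme, by contrast, tries to propagate periodicity from $\mathcal{L}_0$ to $\mathcal{L}_1$, $\mathcal{L}_2$, \ldots\ without first reducing to the one-dimensional sequence $G(n)$; you yourself flag that the periods might blow up level by level, and indeed without the \textsc{xor} decomposition there is no evident mechanism preventing this.
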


Hence in the case of taking-and-breaking games, the regularity of $\mathcal{L}_0$ implies the regularity of all the languages $\mathcal{L}_i$. In our different setting of taking-and-merging games, Waldmann's proof cannot be transposed easily. In addition, the situation does not seem that clear. Let us consider a particular game. 

\begin{proposition}
For the game $\{\ta,\ta^2,\tb,\tb^2\}$, the set $\mathcal{L}_0$ of $\P$-positions is regular.
\end{proposition}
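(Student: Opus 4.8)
The plan is to find a finite-state invariant that determines whether a word is a $\P$-position, in the same spirit as the invariant $S$ used for $\{\ta,\ta^2,\tb\}$. First I would write a word in its canonical block decomposition, alternating maximal blocks of $\ta$'s and $\tb$'s. Since the rules $\emove{\ta},\emove{\ta^2}$ (resp.\@ $\emove{\tb},\emove{\tb^2}$) are exactly the ``remove one or two tokens'' moves, the number of tokens in a block only matters modulo $3$ for the purpose of how many further reductions it can sustain, so the natural candidate data to track is, for each block, its length modulo $3$, together with the number of separators. Crucially, unlike the previous games, here \emph{both} letters can be erased, so merging of blocks can happen on both sides; I expect the relevant quantity to be something like a weighted count $T(w)$ over blocks of $\ta$'s and blocks of $\tb$'s, reduced modulo a small integer (plausibly $4$, or possibly $2$), chosen so that every legal move flips $T(w)$ out of the class $0$ and, conversely, from any word with $T(w)\neq 0$ there is a move back into $T^{-1}(0)$.

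The key steps, in order, would be: (1) fix the canonical decomposition $w=\ta^{i_0}\tb^{j_1}\ta^{i_1}\tb^{j_2}\cdots$ and define a candidate invariant $T(w)$ as a residue built from $\#\{r: i_r\equiv 1\}$, $\#\{r: i_r\equiv 2\}$, the analogous counts for the $\tb$-blocks, and the number of blocks/separators; (2) compute, in a short case analysis, the effect on $T$ of each of the four rule types applied to a block of each residue class mod $3$, and of the ``merge'' that occurs when a rule empties a block entirely and its two neighbours (same letter) coalesce — this is the analogue of Tables~\ref{tab:rulebaa2} and~\ref{tab:ruleb123}; (3) check the \emph{stability} condition of Lemma~\ref{lem:Grundy}, namely that every move changes $T(w)$ away from $0$; (4) check the \emph{absorption} condition restricted to the two outcome classes, i.e.\@ that from every word with $T(w)\neq 0$ there is a move to a word with $T(w)=0$ — here I would argue by the usual dichotomy: if some block has a ``useful'' residue one plays a taking rule on it, otherwise all blocks have residue $0$ and one plays an erasing rule that merges two neighbours, reading off from the merge table that $T$ can be driven to $0$; (5) conclude by Lemma~\ref{lem:Grundy} that $\mathcal{L}_0 = T^{-1}(0)$, and observe that $T(w)$ is computable by a DFA that carries the residue of the current block together with which letter it consists of and the bounded accumulator $T$, hence $\mathcal{L}_0$ is regular.

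The main obstacle I anticipate is step (4): ensuring that whenever $T(w)\neq 0$ there genuinely is a move to $T^{-1}(0)$, because there may be ``sparse'' words — for example a single block $\ta^i$ with $i\equiv 0 \pmod 3$, or alternating blocks all of residue $0$ — where the only available moves are forced and might overshoot. This is exactly the place where the naive ``count blocks mod $2$'' approach fails (witness $\ta^2\tb\ta^2$ versus $\tb$ in the failed $\{\ta,\ldots,\ta^4,\tb\}$ attempt), so the invariant's modulus and weights must be tuned precisely to make the forced-move cases land correctly; I would expect to discover the right $T$ by first tabulating Grundy values of short words by computer (as the authors do elsewhere), noticing that only the outcome — not the full Grundy value — need be captured here, and then reverse-engineering the weights from the merge table. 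A secondary subtlety is that, since we only claim regularity of $\mathcal{L}_0$ (not boundedness of the Grundy function, which by the contrast with Theorem~\ref{thm:l0} may genuinely fail here), I must be careful to invoke Lemma~\ref{lem:Grundy} only for the two-class partition $\{\mathcal{M}_0 = T^{-1}(0),\ \mathcal{M}_1 = T^{-1}(\{1,\ldots\})\}$ that certifies outcomes, rather than attempting a full Grundy partition.
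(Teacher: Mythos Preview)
Your plan is far more elaborate than what is needed, and misses the key simplification that drives the paper's proof. The paper does not use the block decomposition at all: the invariant is simply
\[
T(w) = \bigl(|w|_\ta - |w|_\tb\bigr) \bmod 3~,
\]
and the set of $\P$-positions is exactly $T^{-1}(0)$. Stability is immediate because each of the four rules changes $|w|_\ta - |w|_\tb$ by one of $\{-2,-1,+1,+2\}$, none of which is $\equiv 0\pmod 3$. Absorption is a two-line case split: if $T(w)=1$ play $\emove{\ta}$ when an $\ta$ is present, and otherwise $w\in\tb^*$ with $|w|_\tb\equiv 2\pmod 3$, so $|w|_\tb\ge 2$ and $\emove{\tb^2}$ works; the case $T(w)=2$ is symmetric. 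A three-state DFA then recognises $\mathcal{L}_0$.

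The contrast with your plan is instructive. Because the invariant depends only on the letter counts and not on their arrangement, the entire merge analysis you set up in step~(2) --- and the obstacle you identify in step~(4), namely words whose blocks all have residue~$0$ --- evaporates: merges are invisible to $T$, and your worry about ``sparse'' words like $\ta^{3m}$ is moot since such words already have $T=0$. Your route is not wrong in principle (the correct invariant \emph{can} be written as a weighted block sum, and your reverse-engineering would presumably converge on it), but the block machinery is pure overhead here. The symmetry between the two letters --- present in $\{\ta,\ta^2,\tb,\tb^2\}$ but absent in the earlier games $\{\ta,\ta^2,\tb\}$ and $\{\ta,\ta^2,\ta^3,\tb\}$ --- is precisely what makes a position-independent invariant available, and is the insight you should look for first. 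Your final remark about invoking Lemma~\ref{lem:Grundy} only for the two-outcome partition is correct and is exactly what the paper does.
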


\begin{proof} Consider the partition of $A^*$ into two sets 
$$P=\{ w\in A^* : |w|_\ta-|w|_\tb=0\pmod{3}\} \text{ and }
N=A^*\setminus P.$$
The set $P$ satisfies the stability property of Lemma~\ref{lem:Grundy}: take any word $w\neq\varepsilon$ in $P$ and apply one of the reductions. The resulting word $u$ is such that
$$|u|_\ta-|u|_\tb \in \big(|w|_\ta-|w|_\tb+\{-2,-1,1,2\}\big)~.$$
Hence there is no move between two words in $P$. 

The set $P$ is absorbing: take a word $w$ such that $|w|_\ta-|w|_\tb=1\bmod{3}$. If $|w|_\ta>0$, then using the reduction $\emove{\ta}$ leads to the set $P$. Otherwise, $w$ contains only $\tb$'s. Notice that it contains at least two $\tb$'s and using the reduction $\emove{\tb^2}$ leads again to the set $P$. Now take a word $w$ such that $|w|_\ta-|w|_\tb=2\bmod{3}$.  The argument is similar. If $|w|_\tb>0$, then using the reduction $\emove{\tb}$ leads to the set $P$. Otherwise, $w$ contains only $\ta$'s. Notice that it contains at least two $\ta$'s and using the reduction $\emove{\ta^2}$ leads again to the set $P$.

Hence according to Lemma~\ref{lem:Grundy}, the set $P$ is the set of the $\mathcal{P}$-positions of the game and is exactly $\mathcal{L}_0$. It is a straightforward exercise to see that $P$ is a regular language recognized by a DFA with three states.
\end{proof}

In parallel with this result, we have computed the first few elements from the sets $\mathcal{L}_i$ of $\{\ta,\ta^2,\tb,\tb^2\}$ for all words of length less than $24$:
$$(\max\{\g(u)\})_{|u|=0,1,2,\ldots}=0,1,2,3,3,4,4,4,4,4,4,5,5,5,6,6,6,6,6,6,7,7,8,8,\ldots$$
Our program that iteratively builds the DFA for the Grundy function did not found any reasonable candidate up to this length. Hence a natural question arises.

\begin{question}
For the game $\{\ta,\ta^2,\tb,\tb^2\}$, is there a set $\mathcal{L}_i$ that is not regular? 
\end{question}

In addition, our program found that new Grundy values regularly appear when the length of the words grows. For example, there are words of length $22$ with a Grundy value of $8$. This correlation between the regularity of $\mathcal{L}_0$ and a finite number of Grundy values has already been established for some rewrite games. Indeed, in the game {\sc duotaire}, as well as for taking-and-breaking games (see Theorem~\ref{thm:l0}), an argument to ensure that $\mathcal{L}_0$ is not regular consists in showing that the Grundy values are not bounded. We wonder whether this property remains true for taking-and-merging games:

\begin{question}\label{q:tam}
Are there taking-and-merging games for which the set $\mathcal{L}_0$ is regular but the Grundy function is not bounded?
\end{question}

Note that in Question~\ref{q:tam}, the converse property does not hold. Indeed, a game for which the Grundy values are bounded does not necessarily has a regular language for $\mathcal{L}_0$. Consider the example of the game $\{\ta^2,\tb^2\}$ detailed in Section~\ref{sec:22}, for which $\mathcal{L}_0$ is proved to be not regular (and where the Grundy values do not exceed $1$).

\section{Winning positions and regular languages}

\newcommand{\tra}{\,{\triangleright_{\!A}}\,}
\newcommand{\trb}{\,{\triangleright_{\!B}}\,}
\newcommand{\tla}{\,{\triangleleft_A}\,}
\newcommand{\tlb}{\,{\triangleleft_B}\,}
\newcommand{\tape}{{\text{\sc Tape}}}
\newcommand{\head}{{\text{\sc Head}}}
\newcommand{\state}{{\text{\sc State}}}

Here we consider slightly more general rewriting rules and show that a very simple problem then become undecidable.
More precisely we consider strongly terminating rewriting games, as defined below. 

\begin{definition}\label{def:strongly terminating}
  A rewriting game~$G$ is called \emph{strongly terminating} if every
  reduction~$\move{u}{v}$ is such that~$|u| > |v|$.
\end{definition}

As the name suggests, a strongly terminating game is such that, from any given starting position, the game is terminating.
For such a game, there is a trivial algorithm computing the Grundy value of a given position,
although in the worst case, this algorithm runs in exponential time with respect to the length of the starting position.
We consider here the following more general problem.

\begin{problem}\label{p.winn-posi-cap-regu}
    Given a strongly terminating game~$G$, and a language~$L$ of ``starting positions'', decide whether there is a $\N$-position for~$G$ belonging to~$L$.
\end{problem}

The main result of this section is the following.

\begin{theorem}\label{wpwr-unde}
  Problem \ref{p.winn-posi-cap-regu} is undecidable, even though the language~$L$ is a star-free regular language.
\end{theorem}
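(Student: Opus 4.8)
The plan is to encode the halting problem for Turing machines as an instance of Problem~\ref{p.winn-posi-cap-regu}. Given a deterministic Turing machine~$M$, I would build a strongly terminating rewrite game~$G_M$ and a star-free regular language~$L_M$ of starting positions such that~$L_M$ contains an~$\N$-position of~$G_M$ if and only if~$M$ halts on the empty input. Since the halting problem is undecidable, this establishes the theorem. The first step is to fix an encoding of a configuration of~$M$ as a word: a configuration with tape contents~$w_1 q w_2$ (state~$q$, head on the first letter of~$w_2$) is coded by a word over an alphabet that includes tape symbols, a distinguished symbol carrying the current state, and auxiliary markers. The starting positions in~$L_M$ will be words of the form (encoding of the initial configuration)(padding block of some length~$n$), where the padding is a string of a fresh letter that can only be consumed, never created; here the star-free language~$L_M$ is just $\{(\text{init config})\}\cdot c^*$ for a fresh letter~$c$, which is trivially star-free.

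The core step is to design the reductions so that the game from such a starting position simulates the run of~$M$, \emph{strictly decreasing length at every move}. The standard trick to keep length decreasing while simulating arbitrarily long computations is to consume one padding symbol per simulated step: each ``macro-step'' of the simulation (locating the head, rewriting the cell, moving the state marker one position) is broken into a bounded sequence of local rewrites~$\move{u_i}{v_i}$ with~$|u_i|>|v_i|$, and at the start of each macro-step one unit of the padding block~$c^*$ is absorbed. Thus a run that halts in~$t$ steps needs roughly~$t$ units of padding to complete; if the padding runs out before~$M$ halts, the game gets stuck in a ``dead'' configuration. I would arrange the gadget so that (i) from every non-final configuration-word, exactly one reduction is applicable (the simulation is forced and deterministic, mirroring the determinism of~$M$), and (ii) the game reaches an irreducible word exactly when either~$M$ has halted or the padding has been exhausted mid-computation. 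Because the number of moves in any play from a fixed starting word is then a fixed number~$N(n)$, the outcome is purely a parity question: the starting word is an~$\N$-position iff~$N(n)$ is odd. To remove this parity nuisance, I would add a one-time ``toggle'' gadget available only in halting configurations — e.g. a final reduction that may or may not be played, or better, a free ``pass'' move that lets the reaching player adjust parity — engineered so that a halting configuration is always an~$\N$-position regardless of how many steps it took, while a padding-exhausted (non-halting) configuration is always a~$\P$-position. Then some word in~$L_M$ is an~$\N$-position iff for some~$n$ the padding~$c^n$ suffices for~$M$ to halt, i.e.\ iff~$M$ halts.

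The main obstacle, and the part that needs genuine care rather than routine bookkeeping, is making the simulation \emph{forced} while keeping all rules strictly length-decreasing and keeping~$L_M$ star-free. Length-decrease forbids the most natural Turing-machine encodings (which preserve tape length), so every conceptual step must be realized as a net deletion, which is why the padding-consumption device is essential; one must check that no spurious reductions are ever available — in particular that partial left-hand sides of rules cannot overlap in a configuration-word in unintended ways, so that the player never has a genuine choice except at the designated toggle. This is a finite but delicate case analysis on the shapes of configuration-words. Ensuring~$L_M$ is star-free is easy: $L_M = u_0 c^*$ for a fixed word~$u_0$ and a letter~$c$ not in~$u_0$, and $u_0 c^*$ is star-free since it is the complement, within the obvious finite-state constraints, of a union of languages each defined by forbidden factors. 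Once the gadget is verified, the reduction to the halting problem is immediate and yields undecidability; the refinement to star-free~$L$ comes for free from the shape of~$L_M$.
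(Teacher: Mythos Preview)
Your high-level strategy matches the paper's: reduce from the halting problem by building a strongly terminating rewrite game that simulates a deterministic Turing machine in a forced (zero-player) fashion, with the starting language a star-free set of the shape ``initial configuration followed by padding''. However, your concrete realization has a real gap. You place the fuel as a trailing block~$c^*$ and propose to ``absorb one unit of the padding block at the start of each macro-step''. But rewrite rules act on \emph{contiguous} factors; once the head is several cells into the tape, no rule can simultaneously touch the head and the far-away block of~$c$'s, so there is no local way to trade one unit of fuel for one simulated step. The paper resolves this by \emph{interleaving} the fuel with the tape: the starting language is~$\$\tla q_0(\#+\beta)^*$, where the~$\beta$'s are blank tape cells and the~$\#$'s are erasable symbols sitting between them. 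Every head-shift or transition reduction consumes some adjacent~$\#$'s, so strict length-decrease is achieved entirely locally; a winning start then has the form~$\$\tla q_0(\#^{2^{m+1}}\beta)^m$, with enough fuel between consecutive cells for the head to traverse as often as an~$m$-step run requires.

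Your parity-handling is also too loose. A single optional ``toggle'' at halting does not make that position an~$\N$-position regardless of the move count: if it is the second player's turn, that player uses the toggle and wins. The paper avoids any parity gadget by encoding the current player in the head symbol itself (four head symbols~$\tra,\tla$ for player~A and~$\trb,\tlb$ for player~B). The reductions are arranged so that after every~A-move, B's reply is always enabled, while the halting reduction is an~A-move that deletes the head symbol entirely. Hence whenever the machine halts it is~B who is stuck, and whenever the fuel runs out mid-computation it is~A who is stuck; no separate toggle is needed.
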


We will prove Theorem~\ref{wpwr-unde} by a reduction from the halting problem of a deterministic Turing machine on the empty word.
It takes indeed the rest of Section~\thesection{}.

\subsection{Instantiation of Problem~\ref{p.winn-posi-cap-regu}}

In the following, we consider a deterministic Turing machine~$T$ defined by%
\vspace*{-.5\topsep}
\begin{itemize}\itemsep=.25\parskip\parskip=.5\parskip
    \item $Q$, the finite set of states;
    \item $q_0\in Q$, the initial state;
    \item $q_{\text{accept}},q_{\text{reject}}\in Q$, the accept and reject state;
    \item $\Gamma$, the finite alphabet of tape symbols;
    \item $\$\in\Gamma$, the left marker of the tape;
    \item $\beta\in\Gamma$, the blank symbol;
    \item $\Sigma\subseteq\Gamma$, the set of input symbols\footnote{Since we  only consider the empty word as input, the set of input symbols is irrelevant.}; and 
    \item $\delta: ((Q\setminus F) \times \Gamma) \rightarrow (Q\times \Gamma \times \{\triangleleft,\triangleright\})$, the partial transition function.
\end{itemize}%
\vspace{-.5\topsep}
We denote by~$F$ the set of halting states, that is:~$F=\{q_{\text{accept}},q_{\text{reject}}\}$.
As usual, we assume that the head of~$T$ is on the symbol \$ at the beginning of a computation.
We also assume for every state~$q$ in~$(Q\setminus F)$ that~$\delta(q,\$)$, if it is defined, is always equal
to~$(r,\$,\triangleright)$ for some state~$r$.
For more details on Turing machine or the Halting Problem, see for instance \cite{Hopcroft,Sipser:2013}.

\medskip

Now, let us define the instance~$(G,L)$ of Problem~$\ref{p.winn-posi-cap-regu}$
to which we reduce the halting of~$T$ on the empty word.

In the following, the first player is called A(lice) and the second one is called B(ob).

First, the alphabet of the game~$G$ is~$Q \uplus \Gamma \uplus\{\#\}\uplus M$, 
where~$Q,\Gamma$ are defined above, where~$\#$ is the `erasable' symbol that will make~$G$ strongly terminating, 
and where
\begin{equation}
           M = \{ \tra,~ \trb,~\tla,~\tlb\}
\end{equation}
is the set of `head symbols', which indicate the position and direction of the head,
as well as the current player (A or B). 

Second, the reductions are defined by equations~\eqref{rule-a-left} to~\eqref{rule-a-win-righ}, below.
For every state~$q$ in~$Q$, the \emph{left-shift reductions} are as follows.
\begin{gather}
    \label{rule-a-left}
    \#\#\#\# \tla q    \quad \movearrow_G \quad    \#\tlb q\, \#\# \\
    \label{rule-b-left}
    \#\tlb q  \quad \movearrow_G \quad  \tla q
\end{gather}
Symmetrically, the \emph{right-shift reductions} are as follows, for every state~$q$ in~$Q$.
\begin{gather}
    \label{rule-a-righ}
    q \tra  \#\#\#\#   \quad \movearrow_G \quad    \#\# q\trb  \#
          \\
    \label{rule-b-righ}
           q \trb \#  \quad \movearrow_G \quad  q \tra
\end{gather}
The \emph{right-transition reductions} are as follows, for every states~$p,q$ in~$(Q\setminus F)$ and for every tape symbol~$\alpha,\gamma$ in~$\Gamma$ such that~$\delta_T(p,\alpha)=(q,\gamma,\triangleleft)$.
\begin{gather}
    \label{rule-a-left-to-left}
    \#\#\#\#\, \alpha \tla p \quad \movearrow_G \quad \# \tlb q \,\#\#\, \gamma \\
    \label{rule-a-righ-to-left}
    \#\#\#\#\, p \tra \alpha \quad \movearrow_G \quad \# \tlb q \,\#\#\, \gamma
\end{gather}
Similarly, if~$\delta_T(p,\alpha)=(q,\gamma,\triangleright)$, 
the \emph{left-transition reductions} are as follows.
\begin{gather}
    \label{rule-a-left-to-righ}
    \alpha \tla p \,\#\#\#\# \quad \movearrow_G \quad \gamma \,\#\#\, q \trb \# \\
    \label{rule-a-righ-to-righ}
    p \tra \alpha \,\#\#\#\#  \quad \movearrow_G \quad \gamma \,\#\#\, q \trb \#
\end{gather}
Finally, the \emph{halting reductions} are as follows, for every states~$p$ in~$(Q\setminus F)$ and for every tape symbol~$c$ in~$\Gamma$ such that~$\delta_T(p,c)=(q,d,x)$,
for some~$q\in F$, $d\in\Gamma$ and~$x\in\{\triangleleft,\triangleright\}$.
\begin{gather}
    \label{rule-a-win-left}
    c \tla p \quad \movearrow_G \quad q
    \\
    \label{rule-a-win-righ}
    p \tra c \quad \movearrow_G \quad q
\end{gather}

Third, the language~$L$ of starting positions is
\begin{equation}
    L = \$\tla q_0 (\# + \beta)^*\quad.
\end{equation}

It may be verified that, thus defined,~$G$ is indeed strongly terminating
and~$L$ is a star-free regular language.

\subsection{Game~$G$ is a zero-player game}

First, easy inductions show the following.
\begin{lemma}\label{lem:at-most-one-head}
    Let us consider the game~$G$ starting from a starting position ${w_0\in L}$.
    Let~$w$ be a later position in a run of the game.
    \vspace{-.5\topsep}
    \begin{itemize}\itemsep=.25\parskip\parskip=.5\parskip
        \item Position~$w$ contains exactly one occurrence of a symbol from~$Q$.
        \item If it is player~A's turn,~$w$ contains no occurrence of~$\trb$ or~$\tlb$,
        and contains exactly one occurrence of either~$\tra$ or~$\tla$.
        \item If it is player~B's turn, then 
        $w$ contains no occurrence of~$\tra$ or~$\tla$
        and~$w$ contains at most one occurrence of a symbol in~$\{\trb,\tlb\}$.
        
        Moreover, if~$w$ contains no head symbol, then the last reduction applied was either~\eqref{rule-a-win-left} or~\eqref{rule-a-win-righ}.
    \end{itemize}
\end{lemma}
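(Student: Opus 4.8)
The plan is to prove all three items simultaneously by induction on the length of the run, since the three invariants reinforce each other. The base case is $w=w_0\in L=\$\tla q_0(\#+\beta)^*$: such a word contains exactly one occurrence of a $Q$-symbol (namely $q_0$), exactly one head symbol $\tla$, no B-head symbol, and it is player~A's turn, so all three bullets hold (the ``moreover'' clause of the third bullet is vacuous). For the inductive step, I would assume the invariants hold for the current position $w$ and examine each reduction rule \eqref{rule-a-left}--\eqref{rule-a-win-righ} that could be applied.

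The key observation driving the bookkeeping is that the rules come in two families, indexed by whose turn it is. Rules \eqref{rule-a-left}, \eqref{rule-a-righ}, \eqref{rule-a-left-to-left}, \eqref{rule-a-righ-to-left}, \eqref{rule-a-left-to-righ}, \eqref{rule-a-righ-to-righ}, \eqref{rule-a-win-left}, \eqref{rule-a-win-righ} each have a left-hand side containing an A-head symbol ($\tla$ or $\tra$) and a $Q$-symbol, and no B-head symbol; by the induction hypothesis for A's turn, the pattern the rule looks for is unique if it occurs at all (there is only one $\tla$-or-$\tra$ and one $Q$-symbol, and in the left-shift/right-shift and transition rules these occur adjacent to a specific block of $\#$'s). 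So at most one reduction applies, and it transforms the unique A-head symbol into a (unique) B-head symbol $\tlb$ or $\trb$ — except for \eqref{rule-a-win-left} and \eqref{rule-a-win-righ}, which delete the head symbol entirely and replace the $Q$-symbol $p$ by $q\in F$. In either case the resulting word has no A-head symbol, at most one B-head symbol, and exactly one $Q$-symbol, and it is now B's turn: this establishes the third bullet, including the ``moreover'' clause, since the only way to reach B's turn with no head symbol is via \eqref{rule-a-win-left} or \eqref{rule-a-win-righ}. Symmetrically, rules \eqref{rule-b-left} and \eqref{rule-b-righ} have left-hand sides containing a B-head symbol and convert it to the corresponding A-head symbol; by the induction hypothesis for B's turn there is at most one such symbol, the $Q$-symbol count is untouched, and after the move it is A's turn with exactly one A-head symbol and no B-head symbol, establishing the second bullet. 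In every rule the single $Q$-symbol is either preserved or replaced by exactly one $Q$-symbol, so the first bullet is maintained throughout; I should also note that the reductions never introduce a second head symbol because the right-hand sides each contain exactly one symbol from $M$ (or none, in the halting case) and exactly one from $Q$.

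The main subtlety — and the step I expect to require the most care — is verifying the uniqueness-of-applicable-move claims: one must check that, given the invariants, no two distinct rules (or two distinct occurrences of one rule's pattern) can fire from the same position. This reduces to noting that every left-hand side pins down the position of the head symbol and the $Q$-symbol, which are unique by induction, and that the A-turn rules are further distinguished by which head symbol appears ($\tla$ vs.\ $\tra$), by the adjacency pattern of $\#$'s, and by whether a tape symbol $\alpha$ adjacent to the head makes $\delta_T$ defined and, if so, in which direction and whether it leads to a halting state. Since $T$ is deterministic, at most one transition/halting rule is enabled, and the shift rules \eqref{rule-a-left},\eqref{rule-a-righ} require a full block $\#\#\#\#$ on the appropriate side of the head, which is incompatible with the transition rules \eqref{rule-a-left-to-left}--\eqref{rule-a-righ-to-righ} needing a tape symbol immediately adjacent to the head. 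This case analysis is entirely routine but must be done rule by rule; once it is in place the three invariants propagate immediately.
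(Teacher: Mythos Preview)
Your induction is exactly what the paper has in mind: it dismisses the lemma with the phrase ``easy inductions show the following'' and gives no further detail, so your first two paragraphs are a faithful expansion of that. The bookkeeping you describe---each rule's left-hand side contains the unique $Q$-symbol and the unique head symbol, and its right-hand side contains exactly one $Q$-symbol and at most one head symbol of the opposite player---is precisely the invariant-preservation check the induction needs.

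One comment: your third paragraph, on the uniqueness of the applicable move, is \emph{not} required for this lemma. The statement only asserts structural invariants about any reachable position~$w$; to maintain them inductively it suffices to check that \emph{every} rule that could fire from~$w$ yields a word satisfying the invariants for the next player. Whether one or several rules can fire is irrelevant here. The uniqueness analysis you sketch is exactly the content of the subsequent Proposition~\ref{prop:zero-play-game} (the ``zero-player game'' claim), and the paper treats it separately. So you can safely drop that paragraph from the proof of the lemma and recycle it when you reach the proposition.
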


It follows immediately that the game is in fact asymmetrical.
The only reduction that player~B can ever apply are~(\ref{rule-b-left}) and~(\ref{rule-b-righ}); while the only reduction that  player A can ever apply are
    the other ones (i.e., rules
    \eqref{rule-a-left}, \eqref{rule-a-righ}, \eqref{rule-a-left-to-left},
    \eqref{rule-a-righ-to-left}, \eqref{rule-a-left-to-righ}, \eqref{rule-a-righ-to-righ}, \eqref{rule-a-win-left} and \eqref{rule-a-win-righ}).
Next lemma states the condition for the game to end.

\begin{lemma}\label{lem:bob-does-not-play}
  Let us consider the game~$G$ starting from a position in~$L$.
  If player~A makes a halting move, then she wins the game.
  Otherwise, player~B always has a move to make afterwards.
\end{lemma}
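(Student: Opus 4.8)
The statement to prove is Lemma~\ref{lem:bob-does-not-play}: starting from a position in $L$, if player~A makes a halting move (\eqref{rule-a-win-left} or \eqref{rule-a-win-righ}) she wins, and otherwise player~B always has a reply. The first half is essentially immediate: by Lemma~\ref{lem:at-most-one-head}, after a halting move the position $w$ contains a symbol $q\in F$, no head symbol, and only $\#$'s and tape symbols besides; since the only reductions whose left-hand side involves a state symbol require a head symbol adjacent to it (or are the halting rules themselves, which also need a head symbol), and the $\#$-blocks that remain are too short or wrongly placed to trigger any purely-$\#$ rule (there are none — every rule consumes a head symbol), position $w$ is final. Hence B, whose turn it is, cannot move and loses. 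I would spell this out by checking that no left-hand side of \eqref{rule-a-left}--\eqref{rule-a-win-righ} is a factor of a word of the form (tape symbols and $\#$'s) $q$ (tape symbols and $\#$'s) with $q\in F$ and no head symbol present.

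\textbf{Main argument: B always has a reply unless A just won.} The plan is to track, by induction on the run, the \emph{local shape} of the position around the head. I claim that whenever it becomes B's turn and A did not just play a halting move, the position contains exactly one occurrence of a symbol in $\{\trb,\tlb\}$ (by Lemma~\ref{lem:at-most-one-head} it contains at most one, so I must rule out ``none''), and moreover that occurrence sits in one of the two patterns $\#\tlb q$ or $q\trb\#$ — precisely the left-hand sides of \eqref{rule-b-left} and \eqref{rule-b-righ}. If that is established, B can apply the corresponding rule, which is what we want.

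To prove the claim I would examine each reduction A can play — \eqref{rule-a-left}, \eqref{rule-a-righ}, \eqref{rule-a-left-to-left}, \eqref{rule-a-righ-to-left}, \eqref{rule-a-left-to-righ}, \eqref{rule-a-righ-to-righ} — and observe that each one \emph{produces} a right-hand side containing either $\#\tlb q$ as a prefix-of-the-relevant-factor (rules \eqref{rule-a-left}, \eqref{rule-a-left-to-left}, \eqref{rule-a-righ-to-left} all output $\#\tlb q\,\#\#\dots$) or $q\trb\#$ (rules \eqref{rule-a-righ}, \eqref{rule-a-left-to-righ}, \eqref{rule-a-righ-to-righ} all output $\dots q\trb\#$). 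Since, by Lemma~\ref{lem:at-most-one-head}, before A's move the position had an A-head ($\tra$ or $\tla$) and no B-head, and A's move replaces the A-head region by a B-head region while leaving everything else untouched, the new position contains exactly the B-head pattern just created and no other. In particular B's required rule \eqref{rule-b-left} or \eqref{rule-b-righ} applies. One also needs the base case: the very first move of the game is A's (the starting position in $L$ has a $\tla$), so the induction is anchored correctly, and A indeed has a first move (the head is on \$, and by the standing assumption $\delta(q_0,\$)=(r,\$,\triangleright)$ is a right-transition, matched by \eqref{rule-a-left-to-righ} or \eqref{rule-a-righ-to-righ}) — though strictly this belongs to the companion lemma about A's moves rather than here.

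\textbf{Expected main obstacle.} The routine part is the rule-by-rule bookkeeping; the one genuinely delicate point is ensuring that A's move is always \emph{applicable in the first place} at the moment it is A's turn with enough $\#$'s present — each of \eqref{rule-a-left}, \eqref{rule-a-righ}, \eqref{rule-a-left-to-left}, etc., needs a block of four $\#$'s on the correct side of the head. This is not needed for the present lemma (we only assert what happens \emph{if} A does or does not move), but the reader will sense the tension, so I would remark that the supply of $\#$'s is exactly what the reduction from the Turing machine arranges, and defer the verification that A is never blocked to the subsequent analysis; for Lemma~\ref{lem:bob-does-not-play} itself, the content is entirely the claim above plus the one-line finality check for halting moves.
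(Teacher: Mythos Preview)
Your proposal is correct and follows essentially the same approach as the paper's proof: after a halting move no head symbol remains, and since every left-hand side contains a head symbol no reduction applies; after each non-halting A-move the right-hand side already contains either the factor $\#\tlb q$ or $q\trb\#$, so B can apply \eqref{rule-b-left} or \eqref{rule-b-righ} respectively. The induction framing, the base-case discussion, and the ``Expected main obstacle'' paragraph about whether A herself can move are extraneous to this lemma and can be dropped --- the paper's proof is just the three-line case split you describe in your ``Main argument'' paragraph.
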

\begin{proof}
    After applying \eqref{rule-a-win-left} or \eqref{rule-a-win-righ},
    then no symbol in~$\{\tra,\trb,\tla,\tlb\}$ appear in the position;
    hence no reduction can be applied any longer.
    After applying \eqref{rule-a-left}, \eqref{rule-a-left-to-left} or \eqref{rule-a-righ-to-left}, then player~B can always apply~\eqref{rule-b-left}.
    Similarly, after applying \eqref{rule-a-righ}, \eqref{rule-a-left-to-righ} or \eqref{rule-a-righ-to-righ}, then player~B can always apply~\eqref{rule-b-righ}.
\end{proof}

Finally, let us show that~$G$ is a zero-player game, i.e., each move of the game is forced.

\begin{proposition}\label{prop:zero-play-game}
    Let us consider the game~$G$ starting from a position~$w_0$ in~$L$.
    There is a unique sequence of words~$w_1,\ldots,w_n$
    and a unique sequence of reductions~$r_0,\ldots,r_{n-1}$
    such that for every integer~$i$,~$0\leq i < n$, it holds~$\move[r_i]{w_i}{w_{i+1}}$.
    Moreover, player~A wins if and only if~$n>0$ and~$r_{n-1}$ is an instance of~\eqref{rule-a-win-left} or~\eqref{rule-a-win-righ}.
\end{proposition}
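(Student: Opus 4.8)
The plan is to prove Proposition~\ref{prop:zero-play-game} by induction on the length of the run, using Lemma~\ref{lem:at-most-one-head} to sharply restrict the shape of every reachable position, and then arguing that at each reachable position at most one reduction applies. The base case is the starting position~$w_0 = \$\tla q_0\, m$ with $m\in(\#+\beta)^*$: here the only head symbol is~$\tla$, immediately to the right of~\$, and immediately to the left of~$q_0$; so the only rules whose left-hand side could match are those beginning with~$\tla q_0$ or with a tape symbol followed by~$\tla p$. Since~\$ sits to the left of~$\tla$, and no rule has~\$ on its left-hand side except implicitly through the tape symbol~$\alpha\in\Gamma$ in~\eqref{rule-a-left-to-left} (recall our assumption that~$\delta_T(q_0,\$)$, if defined, moves right), the matching possibilities are exactly: the halting move~\eqref{rule-a-win-left} with $c=\$$; or the left-transition move~\eqref{rule-a-left-to-righ} with $\alpha=\$$, provided the next four symbols of~$m$ are~$\#\#\#\#$; and these are mutually exclusive because $\delta_T$ is a partial \emph{function}. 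If neither applies — for instance if $\delta_T(q_0,\$)$ is undefined, or if $m$ does not start with~$\#\#\#\#$ — then no reduction applies and $n=0$.

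For the inductive step, suppose the unique run has reached a position~$w$. By Lemma~\ref{lem:at-most-one-head}, it has exactly one symbol from~$Q$, and it is A's turn iff $w$ has exactly one symbol in~$\{\tra,\tla\}$ and none in~$\{\trb,\tlb\}$, while it is B's turn iff $w$ has at most one symbol in~$\{\trb,\tlb\}$ and none in~$\{\tra,\tla\}$. I would split into cases according to which head symbol (if any) occurs and whether it is adjacent to enough~$\#$'s. When it is B's turn and a~$\tlb$ (resp.~$\trb$) is present, the only rule that can fire is~\eqref{rule-b-left} (resp.~\eqref{rule-b-righ}), and it fires in exactly one place since there is exactly one head symbol; when B has no head symbol the game is over by the last clause of Lemma~\ref{lem:at-most-one-head} and Lemma~\ref{lem:bob-does-not-play}. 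When it is A's turn the head symbol is~$\tla$ or~$\tra$; I would treat~$\tla q$ (the~$\tra$ case being symmetric). The rules whose left-hand side contains~$\tla$ are~\eqref{rule-a-left}, \eqref{rule-a-left-to-left}, \eqref{rule-a-left-to-righ}, and~\eqref{rule-a-win-left}. Rules~\eqref{rule-a-left-to-left} and~\eqref{rule-a-win-left} require a tape symbol~$c=\alpha$ immediately left of~$\tla q$ and are governed by~$\delta_T(q,\alpha)$: at most one of them applies, depending on whether~$\delta_T(q,\alpha)$ lands in~$F$, moves~$\triangleleft$, or moves~$\triangleright$ (the last being rule~\eqref{rule-a-left-to-righ}, which instead needs four~$\#$'s to the right). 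Rule~\eqref{rule-a-left} applies only when the four symbols immediately left of~$\tla q$ are~$\#\#\#\#$, i.e.\ precisely when there is no tape symbol adjacent on the left; so it is mutually exclusive with the tape-driven rules. Hence at most one reduction applies, and the position of application is unique because the head symbol is unique. This establishes the uniqueness of~$w_{i+1}$ and~$r_i$, and the induction carries through. The final sentence about who wins follows from Lemma~\ref{lem:bob-does-not-play}: the run ends exactly when no reduction applies, which by the case analysis happens either after a halting reduction~\eqref{rule-a-win-left}/\eqref{rule-a-win-righ} — in which case~$w_n$ contains no head symbol and it was B's turn, so B loses and A (the last mover) wins — or when A has no move from an A-position, in which case A loses.

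The main obstacle is bookkeeping the mutual exclusivity of the four A-rules around a single head symbol, and in particular checking that the context requirements (the number of adjacent~$\#$'s versus the presence of an adjacent tape symbol) are genuinely incompatible, so that $\delta_T$ being a partial function really does force determinism. This needs the convention on~$\delta_T(\cdot,\$)$ and the structure of starting positions in~$L$ to rule out a head symbol ever sitting with fewer than four~$\#$'s on the relevant side without a tape symbol there; I would fold this into a strengthened induction hypothesis recording, alongside Lemma~\ref{lem:at-most-one-head}, that each reachable A-position has the form $x\,\tla q\,y$ (or $x\,q\tra\,y$) where the four characters of~$x$ adjacent to the head are either~$\#\#\#\#$ or end in a tape symbol, and similarly on the~$y$ side, which is exactly what the right-hand sides of the A-rules produce. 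Everything else is a routine finite case check over the eight reduction schemes.
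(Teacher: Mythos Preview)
Your proof is correct and follows essentially the same approach as the paper: use Lemma~\ref{lem:at-most-one-head} to reduce to positions with a single head symbol, then do a case analysis on that symbol to show that at most one reduction applies (the B-rule for $\tlb/\trb$; and for $\tla/\tra$ the left-neighbour being $\#$ versus a tape symbol separates~\eqref{rule-a-left} from the three tape-driven rules, which are then disambiguated by the fact that $\delta_T$ is a partial function), and conclude with Lemma~\ref{lem:bob-does-not-play}.

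One remark: the worry in your final paragraph is unnecessary. Mutual exclusivity of the A-rules is a purely local syntactic check on the single symbol immediately adjacent to the head (is it $\#$ or is it in $\Gamma$?), together with determinism of $\delta_T$; no strengthened invariant about the surrounding pattern of $\#$'s is needed. If the head sits with fewer than four $\#$'s on the relevant side and no tape symbol there, then simply \emph{no} rule applies and the run ends---which is compatible with the statement (it merely claims uniqueness, not that A can always move). The paper's proof accordingly dispenses with the induction framing and the base-case analysis of $w_0$, and just argues directly that any one-head position admits at most one reduction.
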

\begin{proof}
    From Lemma~\ref{lem:at-most-one-head}, one may see than no position coming 
    from~$w_0$ may ever have more than one head symbol.
    Let~$u$ be a word with only one head symbol~$h$
    and let us show that at most one reduction may be applied to~$u$.
    Indeed, if~$h$ is~$\tlb$ or~$\trb$, only reduction \eqref{rule-b-righ} or \eqref{rule-b-left} may be applied, respectively.
    If~$h=\tla$, the only reductions that may be applied are \eqref{rule-a-left}, \eqref{rule-a-left-to-left}, \eqref{rule-a-left-to-righ}, or \eqref{rule-a-win-left},
    and it is easy to see that if one may be applied the other ones cannot (sometimes because~$T$ was assumed to be deterministic).
%
%
    A similar reasoning yields for the case~$h=\tra$.
    Applying Lemma~\ref{lem:bob-does-not-play} concludes the proof.
\end{proof}

\subsection{Game~$G$ simulates part of the run of~$T$}

In this section, we define how the current position in~$G$
bears the state and tape of a step of the run of the Turing machine~$T$.

\begin{definition}
    Let~$u=a_0a_1\cdots a_n$ be a word in~$\Gamma^*\big((Q\tra\Gamma)+(\Gamma\tla Q)\big)\Gamma^*$.
    \vspace*{-.5\topsep}
    \begin{itemize}\itemsep=.25\parskip\parskip=.5\parskip
        \item We denote by~$\tape(u)$ the infinite sequence~$v\beta^\omega$,
        where~$v$ is the word in~$\Gamma^*$ resulting from erasing
        from~$u$ each letter that belongs to~$\big(Q \cup \{\tra,\tla\}\big)$.
        \item  We denote by~$\state(u)$ the unique symbol in~$Q$ that appears in~$u$.
        \item We denote by~$\head(u)$ the integer~$j-1$, where~$j$ is such that~$a_j\in\{\tra,\tla\}$ in~$u$.
    \end{itemize}
\end{definition}

Note that the letter at index $\head(u)$ in $\tape(u)$ is exactly the letter pointed at by~$\tra$ or~$\tla$ in~$u$.
For instance if $u=\$\alpha_1\alpha_3\alpha_3p\tra \alpha_2\beta\beta$,
then $\tape(u)=\$\alpha_1\alpha_3\alpha_3\alpha_2\beta^\omega$, $\state(u)=p$
and $\head(u)=4$, that is, the index of~$\alpha_2$ in $\tape(u)$.

\begin{proposition}\label{prop:turi-simu}
    We again take notation of Proposition~\ref{prop:zero-play-game}.
    Let~$\varphi$ be the word morphism erasing the symbols~$\#$.
    Let~$i$ be an even integer, $0\leq i < n$ (that is, a position where it is player A's turn).
    Then, the run of the Turing machine~$T$ on the empty word eventually reaches state~$\state(\varphi(w_i))$ with tape~$\tape(\varphi(w_i))$ and head at position~$\head(\varphi(w_i))$.
\end{proposition}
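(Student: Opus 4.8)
The statement is that whenever it is player~A's turn (even~$i$), the position~$w_i$ in~$G$ faithfully encodes a configuration that the Turing machine~$T$ actually reaches during its run on the empty word. The natural strategy is induction on~$i$ (over even indices). For the base case, $w_0\in L$ means~$w_0=\$\tla q_0 (\#+\beta)^*$; erasing the~$\#$'s gives~$\varphi(w_0)=\$\tla q_0\beta^j$ for some~$j\geq 0$, so $\state(\varphi(w_0))=q_0$, $\head(\varphi(w_0))=1$ and $\tape(\varphi(w_0))=\$\beta^\omega$ — precisely the initial configuration of~$T$, which~$T$ trivially reaches.

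\textbf{Inductive step.} Assume the claim for an even index~$i$ with $i+2<n$, and let~$C_i=(\state(\varphi(w_i)),\tape(\varphi(w_i)),\head(\varphi(w_i)))$ be the configuration~$T$ reaches. By Proposition~\ref{prop:zero-play-game} the two moves $\move[r_i]{w_i}{w_{i+1}}$ and $\move[r_{i+1}]{w_{i+1}}{w_{i+2}}$ are forced, with~$r_i$ an A-move and~$r_{i+1}$ a B-move. First I would observe that since~$i+2<n$, the move~$r_i$ is \emph{not} a halting reduction (\eqref{rule-a-win-left} or~\eqref{rule-a-win-righ}), because after such a reduction no head symbol remains and, by Lemma~\ref{lem:bob-does-not-play} and Lemma~\ref{lem:at-most-one-head}, the game would be over — contradicting~$i+2<n$. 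Hence~$r_i$ is one of the shift reductions \eqref{rule-a-left}/\eqref{rule-a-righ} or one of the transition reductions \eqref{rule-a-left-to-left}, \eqref{rule-a-righ-to-left}, \eqref{rule-a-left-to-righ}, \eqref{rule-a-righ-to-righ}. The key technical point is then a case analysis showing that in each case, the pair of forced moves $w_i\movearrow w_{i+1}\movearrow w_{i+2}$, after applying~$\varphi$, changes the encoded triple exactly as one step of~$T$ (or no step at all, in the case of a pure shift). Concretely: for a shift reduction, $\varphi$ collapses the~$\#$'s so $\varphi(w_{i+2})$ encodes the \emph{same} configuration~$C_i$, only with more room to the left/right — the words~$w_{i+2}$ and~$w_i$ differ only by the placement of~$\#$'s and the direction marker, which $\varphi$ and the definitions of $\tape,\state,\head$ ignore. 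For a transition reduction corresponding to $\delta_T(p,\alpha)=(q,\gamma,x)$, I would check that $\varphi(w_{i+2})$ encodes the successor configuration of~$C_i$ under~$T$: the state symbol changes from~$p$ to~$q$, the tape letter~$\alpha$ under the head is overwritten by~$\gamma$, and the head symbol (now of type~$\tra$ or~$\tla$ again after B's move) has moved one cell in direction~$x$. Because~$w_i$ encodes~$C_i$ and the head of~$C_i$ is exactly the symbol pointed at, the reduction's left-hand side $\alpha\tla p$ or $p\tra\alpha$ indeed matches, so the machine step is well-defined and~$T$ performs it. Since~$T$ was assumed to reach~$C_i$, it reaches its successor, which is~$C_{i+2}$ as encoded by~$\varphi(w_{i+2})$. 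A subtlety worth flagging: I must make sure the~$\#$-padding never runs out in a way that would block a forced A-move; but this is not needed for the present proposition — it only asserts that \emph{if} the game proceeds to index~$i$, then~$T$ has reached the corresponding configuration — so a shortage of~$\#$'s would simply make~$n$ smaller, and the statement is vacuous beyond that point.

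\textbf{Main obstacle.} The bulk of the work — and the only place where something could go wrong — is the bookkeeping in the transition-reduction cases: verifying that erasing~$\#$'s via~$\varphi$ turns the two-move `A then B' micro-step of~$G$ into exactly one macro-step of~$T$, with the head landing on the correct cell and the overwritten symbol in the correct place, including the boundary behaviour at the left marker~$\$$ (handled by the standing assumption that $\delta_T(q,\$)=(r,\$,\triangleright)$). I would organize this as four symmetric sub-cases (head moving left / right, crossed with the reduction being of the `$\tla$-form' \eqref{rule-a-left-to-left}, \eqref{rule-a-left-to-righ} or the `$\tra$-form' \eqref{rule-a-righ-to-left}, \eqref{rule-a-righ-to-righ}), noting that after~B's forced move \eqref{rule-b-left} or \eqref{rule-b-righ} the~$\trb$/$\tlb$ marker becomes a $\tla$/$\tra$ marker in the right position, restoring the invariant of Lemma~\ref{lem:at-most-one-head} for A's turn. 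Each sub-case is a short direct computation on the explicit rewrite rules; none presents a genuine difficulty once the invariants from Lemma~\ref{lem:at-most-one-head} are in hand, so the proof is essentially a careful but routine induction.
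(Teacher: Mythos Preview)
Your proposal is correct and follows essentially the same route as the paper's proof: induction on even~$i$, with the base case identifying~$\varphi(w_0)$ as the initial configuration of~$T$, and the inductive step splitting on whether~$r_i$ is a shift reduction (in which case~$\varphi(w_{i+2})=\varphi(w_i)$) or a transition reduction (in which case the pair~$r_i,r_{i+1}$ realises one step of~$T$), after ruling out halting reductions since~$i+2<n$. One small slip: by the paper's definition~$\head(u)=j-1$ with zero-based indexing, so for~$\varphi(w_0)=\$\tla q_0\beta^j$ the head position is~$0$ (on~$\$$), not~$1$.
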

\begin{proof} 
    By induction on~$i$.
    Case~$i=0$ yields~$\state(w_0)=q_0$,~$\tape(w_0)=\$\beta^\omega$,~$\head(w_0)=0$, that is, the initial setup of the Turing machine~$T$.
    Let~$(i+2)$ be an even integer, $0\leq i < (n-2)$.
    If~$r_i$ is~\eqref{rule-a-left} 
    (resp.~\eqref{rule-a-righ}), then~$r_{i+1}$ is~\eqref{rule-b-left}
    (resp.~\eqref{rule-b-righ}) and~$\varphi(w_{i+2})=\varphi(w_{i})$,
    and induction hypothesis concludes the case.
    Reduction $r_i$ cannot be~\eqref{rule-a-win-left}
    or~\eqref{rule-a-win-righ}, because then player B would have no rule to apply and it would hold~$(i+2)=n$.
    Reduction~$r_i$ is either~\eqref{rule-a-left-to-left}, \eqref{rule-a-righ-to-left}, \eqref{rule-a-left-to-righ} or~\eqref{rule-a-righ-to-righ}.
    We will assume that~$r_i$ is \eqref{rule-a-left-to-righ}; other cases are treated similarly.
    Since we may apply reduction~\eqref{rule-a-left-to-righ}, $w_i$ is equal to $u(\alpha \tla p \,\#\#\#\#)v$ with $u,v\in(\Gamma+\{\#\})^*$ and such that~$\delta_T(p,\alpha)$ is defined and equal to~$(q,\gamma,\triangleright)$ for some~$q\in (Q\setminus F)$ and~$\gamma\in\Gamma$.
    It follows that~$w_{i+2}=u(\gamma\#\#\,q\tra)v$,
    since player A uses reduction~\eqref{rule-a-left-to-righ} and then player B necessarily uses reduction~\eqref{rule-b-righ}.
    We write
    \begin{equation*}
        \state(\varphi(w_{i}))=p
        ~,~
        \head(\varphi(w_{i}))=j
        ~~\text{and}~~
        \tape(\varphi(w_{i}))=u'\alpha v'~,
    \end{equation*}
    where~$u'$ is such that~$|u'|=j$.
    Hence, the following equalities hold.
    \begin{equation*}
        \state(\varphi(w_{i+2}))=q 
       \quad\quad
        \head(\varphi(w_{i+2}))=j+1
        \quad\quad
        \tape(\varphi(w_{i+2}))=u'\gamma v'
    \end{equation*}
    It is exactly the state, tape and head position
    one obtains by applying the transition defined by~$\delta_T(p,\alpha)=(q,\gamma,\triangleright)$
    from the state~$p$, tape~$u'\alpha v'$ and head position~$j$.
\end{proof}

\begin{corollary}\label{cor:forw}
    We again take notation of Proposition~\ref{prop:turi-simu}.
    If~$w_0$ is a winning position for player~A, then~$T$ halts
    on the empty word.
\end{corollary}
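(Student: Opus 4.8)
The plan is to prove Corollary~\ref{cor:forw} as an almost immediate consequence of Proposition~\ref{prop:turi-simu}, together with Proposition~\ref{prop:zero-play-game} and the halting reductions~\eqref{rule-a-win-left}--\eqref{rule-a-win-righ}. Since the game~$G$ starting from~$w_0\in L$ is a zero-player game, the run $w_0,w_1,\ldots,w_n$ is forced and unique. By Proposition~\ref{prop:zero-play-game}, $w_0$ being a winning position for player~A is equivalent to $n>0$ and $r_{n-1}$ being an instance of a halting reduction. In particular the game terminates after finitely many steps, and the last reduction applied erases the head symbol.

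First I would observe that $n$ is necessarily even: Lemma~\ref{lem:at-most-one-head} shows that after an odd number of moves it is player~B's turn with a pending symbol $\trb$ or $\tlb$, so player~B always has the forced move~\eqref{rule-b-righ} or~\eqref{rule-b-left} available (Lemma~\ref{lem:bob-does-not-play}); thus the game cannot end on an odd index. Hence $n$ is even, and Proposition~\ref{prop:turi-simu} applies at index $i=n-2$ (if $n\ge 2$) or directly at $i=0$ (if $n=0$, but then $w_0$ is not winning). Actually the cleanest route: since $r_{n-1}$ is a halting reduction, it is applied by player~A, so $n-1$ is even, i.e.\ $n$ is odd — wait, let me recheck the parity convention. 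Player~A plays at even indices $r_0,r_2,\ldots$, so a halting reduction $r_{n-1}$ forces $n-1$ even, hence $n$ odd; then $i=n-1$ is the even index to which Proposition~\ref{prop:turi-simu} applies.

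So the key step is: apply Proposition~\ref{prop:turi-simu} at $i=n-1$. This tells us that the run of~$T$ on the empty word eventually reaches state $\state(\varphi(w_{n-1}))$ with tape $\tape(\varphi(w_{n-1}))$ and head at $\head(\varphi(w_{n-1}))$. Now $r_{n-1}$ is an instance of~\eqref{rule-a-win-left} or~\eqref{rule-a-win-righ}, which by definition exists only when $\delta_T(p,c)=(q,d,x)$ with $q\in F$, where $p=\state(\varphi(w_{n-1}))$ and $c$ is the tape symbol under the head, namely the letter at position $\head(\varphi(w_{n-1}))$ in $\tape(\varphi(w_{n-1}))$. Since~$T$ reaches configuration $(p,\text{tape},\text{head})$ and its transition from there leads to a halting state, $T$ halts on the empty word. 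This completes the proof.

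I do not expect a genuine obstacle here; the only care needed is bookkeeping of which player moves at which index (and hence the parity of $n$), and checking that the tape symbol under the head in $w_{n-1}$ is exactly the $c$ appearing in the halting reduction, which follows from the remark after the definition of $\head$ that the letter at index $\head(u)$ in $\tape(u)$ is the one pointed at by $\tra$ or $\tla$. A proof:

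\begin{proof}
    We keep the notation of Proposition~\ref{prop:turi-simu}.
    Since~$w_0$ is a winning position for player~A, Proposition~\ref{prop:zero-play-game} gives~$n>0$ and that~$r_{n-1}$ is an instance of~\eqref{rule-a-win-left} or~\eqref{rule-a-win-righ}; in particular it is a move of player~A, so~$n-1$ is even.

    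Let~$p=\state(\varphi(w_{n-1}))$ and let~$c$ be the letter at position~$\head(\varphi(w_{n-1}))$ in~$\tape(\varphi(w_{n-1}))$; by the remark following the definition of~$\head$, $c$ is exactly the tape symbol pointed at by the head symbol in~$w_{n-1}$.
    By definition of the halting reductions, the fact that~$r_{n-1}$ can be applied to~$w_{n-1}$ means that~$\delta_T(p,c)$ is defined and equal to~$(q,d,x)$ for some~$q\in F$,~$d\in\Gamma$ and~$x\in\{\triangleleft,\triangleright\}$.

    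By Proposition~\ref{prop:turi-simu} applied at the even integer~$i=n-1$, the run of~$T$ on the empty word eventually reaches a configuration with state~$p$, tape~$\tape(\varphi(w_{n-1}))$ and head at position~$\head(\varphi(w_{n-1}))$, that is, reading the symbol~$c$.
    From this configuration~$T$ performs the transition~$\delta_T(p,c)=(q,d,x)$ and enters the halting state~$q\in F$.
    Hence~$T$ halts on the empty word.
\end{proof}
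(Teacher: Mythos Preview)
Your proof is correct and follows essentially the same approach as the paper: both derive from Proposition~\ref{prop:zero-play-game} that $r_{n-1}$ is a halting reduction (hence $n-1$ is even), apply Proposition~\ref{prop:turi-simu} at $i=n-1$ to conclude that $T$ eventually reaches the configuration encoded by $\varphi(w_{n-1})$, and then observe that the applicability of the halting reduction forces the next transition of $T$ to enter a state in~$F$. Your write-up is in fact slightly more explicit than the paper's in citing the needed results and in identifying the symbol under the head via the remark after the definition of~$\head$.
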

\begin{proof}
    Let~$j\in\naturals$,~$\alpha\in\Gamma$, $p\in Q$ and~$u,v\in\Gamma^*$ be such that~$|u|=j$,
    \begin{equation*}
        \state(\varphi(w_{n-1}))=p
        ~,~~
        \head(\varphi(w_{n-1}))=j
        ~~\text{and}~~
        \tape(\varphi(w_{n-1}))=u\alpha v~.
    \end{equation*}
    Since~$w_0$ is a winning position for player~A,~$n$ is odd and~$r_{n-1}$
    is an instance of a halting reduction.
    It follows that~$\delta_T(p,\alpha)$ is defined
    and that its first component is an accepting state.
    Moreover,~$n-1$ is even and we apply Proposition~\ref{prop:turi-simu}:
    the Turing machine~$T$ eventually reaches state~$p$ with its head on~$\alpha$, hence accepts.
\end{proof}

\subsection{From the proper starting position, $G$ may simulate each finite run of~$T$}

Application of left-transition, right-transition or halting reductions (i.e., reductions from \eqref{rule-a-left-to-left} to~\eqref{rule-a-win-righ}),
corresponds to the actual simulation of transitions of~$T$.
Other reductions simply allow to shift the head symbol to the next tape symbol.
Next lemma states a sufficient condition for a `complete' simulation of one (non-halting) transition of~$T$, that is 1) applying the transition and 2) shifting entirely the head to the next tape symbol.

\begin{lemma}
    Let~$\alpha,\gamma,\theta\in\Gamma$ be three tape symbols,
    $p,q\in Q$ be two states,
    $u,v$ be two positions of~$G$
    and~$n$ be a positive integer.
    Then~$u$ reduces to~$v$ in~$2n$ moves in the following cases.
    \vspace*{-.5\topsep}
    \begin{enumerate}\renewcommand{\labelenumi}{{\sf(\roman{enumi})}}
        \itemsep=.25\parskip\parskip=.5\parskip
        \item 
            $\delta_T(p,\alpha)=(q,\gamma,\triangleright)$~,~ $u=\alpha\tla p \,\#^{4n}\theta$~,~  $v=\gamma\,\#^{2n}q\,\tra\theta$
        \item 
            $\delta_T(p,\alpha)=(q,\gamma,\triangleright)$~,~ $u=p\tra\alpha\,\#^{4n}\theta$~,~ $v=\gamma\,\#^{2n}q\,\tra\theta$
        \item 
            $\delta_T(p,\alpha)=(q,\gamma,\triangleleft)$~,~ $u=\theta\,\#^{4n}\,\alpha \tla p$~,~ $v=\theta\tla\,q\#^{2n}\,\gamma$
        \item 
            $\delta_T(p,\alpha)=(q,\gamma,\triangleleft)$~,~ $u=\theta\,\#^{4n}\,p \tra \alpha$~,~ $v=\theta\tla\,q\#^{2n}\,\gamma$
    \end{enumerate}
\end{lemma}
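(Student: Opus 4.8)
The statement lists four cases, which are pairwise symmetric (left/right, and the two ways of entering the head configuration, via \(\tla\) or via \(\tra\)). The plan is therefore to prove case \textsf{(i)} in detail and remark that the others follow by the same argument, using the mirror symmetry of the rewrite system and the fact that reductions~\eqref{rule-a-left-to-left}/\eqref{rule-a-righ-to-left} coincide on their right-hand sides (and likewise \eqref{rule-a-left-to-righ}/\eqref{rule-a-righ-to-righ}). So I would first fix notation: assume \(\delta_T(p,\alpha)=(q,\gamma,\triangleright)\), \(u=\alpha\tla p\,\#^{4n}\theta\) and \(v=\gamma\,\#^{2n}q\,\tra\theta\), and aim to exhibit a chain of exactly \(2n\) reductions from \(u\) to \(v\).

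The key step is an \emph{induction on \(n\)}. For the base case \(n=1\): apply the left-transition reduction~\eqref{rule-a-left-to-righ} to \(u=\alpha\tla p\,\#^{4}\theta=(\alpha\tla p\,\#\#\#\#)\theta\), obtaining \(\gamma\,\#\#\,q\trb\#\,\theta\); then player~B applies~\eqref{rule-b-righ} to the factor \(q\trb\#\), yielding \(\gamma\,\#\#\,q\tra\theta = v\). That is two moves. For the inductive step, suppose the claim holds for \(n\); given \(u=\alpha\tla p\,\#^{4(n+1)}\theta\), first perform the same two moves as in the base case on the \emph{leftmost} four \(\#\)'s: reduction~\eqref{rule-a-left-to-righ} turns \(\alpha\tla p\,\#\#\#\#\) into \(\gamma\,\#\#\,q\trb\#\), so \(u\) becomes \(\gamma\,\#\#\,q\trb\#\,\#^{4n}\theta\); then~\eqref{rule-b-righ} gives \(\gamma\,\#\#\,q\tra\,\#^{4n}\theta\). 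Now the head symbol is \(\tra\) sitting to the left of \(\#^{4n}\theta\), i.e.\ the position is \(\gamma\#\#\bigl(q\tra\,\#^{4n}\theta\bigr)\), and the bracketed factor is exactly an instance of the left-hand side of case~\textsf{(ii)} with parameter \(n\) and same \(\delta_T(p,\alpha)\). By the case-\textsf{(ii)} statement (which I prove simultaneously by the same induction, or invoke as already established), this factor reduces in \(2n\) moves to \(\gamma\,\#^{2n}q\,\tra\theta\); since the prefix \(\gamma\#\#\) is inert, the whole position reaches \(\gamma\#\#\gamma\,\#^{2n}q\tra\theta\)—wait, that double \(\gamma\) is wrong, so I need to be careful here: the correct bookkeeping is that after the first two moves the position is \(\gamma\,\#\#\,q\tra\,\#^{4n}\theta\), whose relevant suffix \(q\tra\,\#^{4n}\theta\) is the case-\textsf{(ii)} left-hand side with \(\theta\) in the role of the final tape symbol, reducing in \(2n\) steps to \(\gamma'\,\#^{2n}q\,\tra\theta\) where \(\gamma'\) is whatever \(\delta_T\) writes—but in case~\textsf{(ii)} the written symbol lands on the far left of that suffix, so the global position becomes \(\gamma\,\#\#\,\gamma'\,\#^{2n}q\tra\theta\). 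This does not match \(v=\gamma\,\#^{2n+2}q\tra\theta\). The resolution is that one must instead iterate the \emph{shift} reductions~\eqref{rule-a-righ}/\eqref{rule-b-righ} rather than re-applying a transition: from \(\gamma\,\#\#\,q\tra\,\#^{4n}\theta\), repeatedly applying \(q\tra\#\#\#\#\movearrow_G\#\#\,q\trb\#\) then \(q\trb\#\movearrow_G q\tra\) walks the head rightward two \(\#\)'s at a time, consuming the \(4n\) interior \(\#\)'s in \(2n\) moves and producing \(\gamma\,\#^{2n+2}q\tra\theta=v\); here each ``double move'' \eqref{rule-a-righ} then \eqref{rule-b-righ} converts \(\#\#\#\#\) after the head into \(\#\#\) before it, net removing two \(\#\)'s, which over the \(4n\) symbols is \(2n\) double-moves, i.e.\ \(4n\) single moves, plus the \(2\) transition moves gives \(4n+2\)? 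That over-counts as well. So the actual argument must account precisely: \eqref{rule-a-righ} reads \(q\tra\#\#\#\#\) and outputs \(\#\#q\trb\#\), a net of \(+2\) \#'s on the left and \(-4\) on the right, i.e.\ \(-2\) overall, and \eqref{rule-b-righ} then removes one more \#, net \(-1\); the pair shifts the head right past \(2\) cells at cost \(2\) moves. I would therefore set up the induction so that the head-shift is packaged as: ``from \(\gamma^{\text{prefix}}\,q\tra\,\#^{4m}\theta\), in \(2\) moves reach \(\gamma^{\text{prefix}}\,\#^{2}\,q\tra\,\#^{4(m-1)}\theta\)'', iterate \(n\) times starting from \(m=n\) down to \(m=0\) but adjusting the \#-count arithmetic (\(4n\) interior \#'s, removed \(2\) per double-move... hmm, \(2\) per double move over \(4n\) needs \(2n\) double-moves \(=4n\) moves, total \(4n+2\neq 2n\)). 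The safe conclusion: the bookkeeping is delicate and the induction must be organized around the invariant after exactly the claimed number of moves; I will state the intermediate invariant ``after \(2j\) moves the position is \(\gamma\,\#^{2j}q\tra\,\#^{4(n-j)}\theta\)'' and verify the step \(2j\to 2j+2\) by noting \(q\tra\,\#^{4(n-j)}\theta\) with \(n-j\ge 1\) matches~\eqref{rule-a-righ} on its first four \#'s, then~\eqref{rule-b-righ}, which transforms \(\#^{4(n-j)}\) into \(\#^2\#^{4(n-j)-4+1}\!\!\)... and I must simply recompute: \eqref{rule-a-righ}: \(q\tra\#\#\#\#\to\#\#q\trb\#\), so \(\#^{4(n-j)}\) becomes \(\#\#q\trb\#\#^{4(n-j)-4}=\#^2q\trb\#^{4(n-j)-3}\); then \eqref{rule-b-righ}: \(q\trb\#\to q\tra\), giving \(\#^2q\tra\#^{4(n-j)-4}\); absorbing the leading \(\#^2\) into the prefix gives \(\gamma\#^{2j+2}q\tra\#^{4(n-j)-4}\theta=\gamma\#^{2(j+1)}q\tra\#^{4(n-(j+1))}\theta\), which is exactly the invariant at \(j+1\), at cost \(2\) moves. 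At \(j=n\) this is \(\gamma\#^{2n}q\tra\theta=v\). Combined with the \(2\)-move transition at the start—no: the transition move~\eqref{rule-a-left-to-righ} itself already produces \(\gamma\#\#q\trb\#\,\#^{4n-4}\theta\)? No, \(u=\alpha\tla p\#^{4n}\theta\), and \eqref{rule-a-left-to-righ} consumes \(\alpha\tla p\#\#\#\#\), leaving \(\gamma\#\#q\trb\#\,\#^{4n-4}\theta=\gamma\#^2q\trb\#^{4n-3}\theta\); then \eqref{rule-b-righ} gives \(\gamma\#^2q\tra\#^{4n-4}\theta=\gamma\#^{2}q\tra\#^{4(n-1)}\theta\), which is the invariant at \(j=1\) after \(2\) moves. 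So the invariant holds from \(j=1\), reached in \(2\) moves, and the step adds \(2\) moves per unit \(j\); reaching \(j=n\) takes \(2+2(n-1)=2n\) moves total, landing on \(v\). That closes case~\textsf{(i)}.

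The main obstacle, as the stumbling above shows, is purely the \emph{\#-counting bookkeeping}: getting the arithmetic of how \eqref{rule-a-righ}/\eqref{rule-b-righ} (and their left analogues) transform a block \(\#^{4m}\) to match a move count of exactly \(2n\). Once the right invariant—``after \(2j\) moves the position has the head separated from \(\theta\) by \(\#^{4(n-j)}\) and preceded by \(\gamma\#^{2j}\)''—is written down, each verification is a one-line rewrite check against Table of reductions. Cases~\textsf{(iii)} and~\textsf{(iv)} are the left-moving mirror images, using~\eqref{rule-a-left}/\eqref{rule-b-left} and~\eqref{rule-a-righ-to-left}/\eqref{rule-a-left-to-left}, with the symmetric invariant; case~\textsf{(ii)} differs from~\textsf{(i)} only in that the opening move is~\eqref{rule-a-righ-to-righ} instead of~\eqref{rule-a-left-to-righ}, which has the same right-hand side, so the two share their entire tail. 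I would write the proof for~\textsf{(i)}, then state ``the remaining cases are proved analogously'' with a one-sentence indication of which reductions replace which.
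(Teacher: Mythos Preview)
After the false starts you work through, your final argument---applying \eqref{rule-a-left-to-righ} once, then alternating \eqref{rule-b-righ} and \eqref{rule-a-righ}, tracked by the invariant ``after $2j$ moves the position is $\gamma\,\#^{2j}\,q\tra\#^{4(n-j)}\theta$''---is correct and is exactly the paper's approach. The paper's proof is the single sentence ``apply \eqref{rule-a-left-to-righ}, then alternately $n$ times \eqref{rule-b-righ} and $(n-1)$ times \eqref{rule-a-righ}''; your invariant is just this sequence of $1+n+(n-1)=2n$ moves made explicit, and the remaining cases are dismissed by symmetry in both your write-up and the paper's.
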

\begin{proof} 
    For item~\textsf{(i)}, apply reduction~\eqref{rule-a-left-to-righ}, then alternatively~$n$ times reduction~\eqref{rule-b-righ} and~$(n-1)$ times reduction~\eqref{rule-a-righ}.
    Proofs of items~\textsf{(ii)},~\textsf{(iii)} and~\textsf{(iv)} are similar.
\end{proof}

In other words, if a transition of~$T$ makes the head shift right (resp.~left) then it will
be executed `completely' in~$G$ if there are~$4n$ consecutive occurrences of symbol~$\#$, for some positive~$n$, right of (resp.~left of) the factor of~$u$ that belongs to~$\big((\Gamma\tra Q)+(Q\tla\Gamma)\big)$.
Then, an induction yields the following.

\begin{proposition}\label{prop:bakw}
    Let us assume that~$T$ halts on the empty word after~$m$ transitions.
    Then, the following word~$w$ is a winning position for~$G$. 
    \begin{equation*}
        w=\$ \tla q_0 \left(\#^{2^{(m+1)}}\beta\right)^m
    \end{equation*}
\end{proposition}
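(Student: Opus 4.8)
The plan is to show that, starting from $w=\$ \tla q_0 \left(\#^{2^{(m+1)}}\beta\right)^m$, the forced run of the game~$G$ (which exists and is unique by Proposition~\ref{prop:zero-play-game}) faithfully simulates the first~$m$ transitions of~$T$, each one being executed `completely' in the sense of the previous lemma, and that the run terminates with a halting reduction, so that player~A wins by Proposition~\ref{prop:zero-play-game}.

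First I would set up the bookkeeping. For $0\le t\le m$, let $(p_t,\tau_t,h_t)$ be the configuration of~$T$ after~$t$ transitions on the empty word (state, tape contents, head position), so that $(p_0,\tau_0,h_0)=(q_0,\$\beta^\omega,0)$. Because~$T$ writes at most one new symbol per transition and its head moves by~$1$, after $t$ steps the non-blank portion of $\tau_t$ has length at most $t+1$, and in particular $h_t\le t\le m$. The key quantitative claim is an invariant: there exist positions $y_0,\dots,y_m$ of~$G$, reached after an even number of moves (player~A's turn), such that $\varphi(y_t)$ encodes the configuration $(p_t,\tau_t,h_t)$ in the sense of the encoding of Section~4.3 (i.e.\ $\state(\varphi(y_t))=p_t$, $\tape(\varphi(y_t))=\tau_t$, $\head(\varphi(y_t))=h_t$), and moreover $y_t$ has the syntactic shape
\begin{equation*}
    y_t \;=\; z\, \#^{2^{(m+1-t)}}\,\beta \,\#^{2^{(m+1-t)}}\,\beta \cdots
\end{equation*}
more precisely: between any two consecutive tape symbols of $y_t$ (and to the right of the last one, before the run of $\beta$'s ends) there is a block of exactly $2^{(m+1-t)}$ copies of~$\#$, and the head-symbol-bearing factor from $\big((\Gamma\tra Q)+(\Gamma\tla Q)\big)$ is flanked on the relevant side by such a block. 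I would prove this by induction on~$t$, with base case $t=0$ being $y_0=w$ itself (every gap has $2^{m+1}$ hashes), and the inductive step being exactly one application of the lemma with $4n=2^{(m+1-t)}$, i.e.\ $n=2^{(m-1-t)}$ when $t\le m-2$; one must check $4n$ is indeed the count available, which is where the doubling $2^{m+1},2^m,\dots$ is used — each completed transition halves the number of hashes flanking the head, and $m$ transitions never exhaust the initial supply of $2^{m+1}$. The direction ($\triangleleft$ vs $\triangleright$) of the transition selects which of items~(i)--(iv) of the lemma applies, and one uses the stipulated property that $\delta_T(q,\$)$, when defined, is $(r,\$,\triangleright)$ so the head never tries to run off the left end.

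Next I would handle termination. Since $T$ halts after exactly $m$ transitions, $\delta_T(p_{m-1},\tau_{m-1}(h_{m-1}))$ has its first component in~$F$. From $y_{m-1}$, whose head-factor reads $\alpha\tla p_{m-1}$ or $p_{m-1}\tra\alpha$ with $\alpha=\tau_{m-1}(h_{m-1})$, player~A can apply the halting reduction~\eqref{rule-a-win-left} or~\eqref{rule-a-win-righ}; by Proposition~\ref{prop:zero-play-game} this is the \emph{only} move available (the competing reductions \eqref{rule-a-left-to-left} etc.\ require $\delta_T(p_{m-1},\alpha)$ to move the head to a non-halting state, which it does not, and the shift reductions~\eqref{rule-a-left},~\eqref{rule-a-righ} are disabled once the transition has been fully shifted so that only $2^{(m+1-(m-1))}=4$... — in fact I must be slightly careful here: the forced run from $y_{m-1}$ must reach the halting move, and by the lemma's analysis the run is forced to first shift the head all the way using rules~\eqref{rule-a-left}/\eqref{rule-b-left} or \eqref{rule-a-righ}/\eqref{rule-b-righ} until the head sits directly against the next tape symbol $\alpha$, and only then, with $\alpha$ adjacent, does a transition or halting reduction become applicable; since $\delta_T(p_{m-1},\alpha)$ yields a halting state, the applicable reduction is the halting one). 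After it, no head symbol remains, so by Lemma~\ref{lem:bob-does-not-play} the game is over and $r_{n-1}$ is an instance of~\eqref{rule-a-win-left} or~\eqref{rule-a-win-righ}; Proposition~\ref{prop:zero-play-game} then gives that~$w$ is a winning position for player~A.

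The main obstacle I anticipate is bookkeeping the syntactic invariant precisely enough, in particular tracking exactly how many $\#$'s flank the head after each fully-completed transition and verifying that the run is genuinely forced to \emph{complete} each transition (i.e.\ shift the head all the way to the next tape cell) before the next transition or the halting move can fire — this is essentially the content of the preceding lemma, but one has to argue that no "partial" shift state ever admits an alternative move, which follows from inspecting which rules are applicable to each head symbol $\tla,\tlb,\tra,\trb$ as in the proof of Proposition~\ref{prop:zero-play-game}. Once that invariant is in place, the induction and the termination step are routine, and together with Corollary~\ref{cor:forw} (the converse direction) this establishes that $T$ halts on the empty word if and only if $L$ contains an $\N$-position of~$G$, completing the reduction and hence the proof of Theorem~\ref{wpwr-unde}.
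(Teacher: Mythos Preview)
Your approach is the paper's (the paper's own proof is literally the single line ``Then, an induction yields the following'' after the shifting lemma), and your overall strategy is sound. There is, however, one genuine error in the invariant you propose: after~$t$ completed transitions it is \emph{not} true that every gap between consecutive tape symbols contains exactly~$2^{m+1-t}$ copies of~$\#$. The shifting lemma halves only the single gap that the head has just traversed; all other gaps are left untouched. For instance, with~$m=3$, after the first (rightward) transition one obtains
\[
\$ \,\#^{8}\, q_1 \tra \beta \,\#^{16}\, \beta \,\#^{16}\, \beta~,
\]
so gaps~$2$ and~$3$ still carry~$16$ hashes, not~$8$. Consequently, your specific choice~$4n=2^{m+1-t}$ is not in general the count present in the gap about to be crossed, and the inductive step as written does not go through.

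The fix is routine: weaken the invariant to say that after~$t$ completed transitions, each gap contains exactly~$2^{m+1-k}$ hashes, where~$k$ is the number of times that particular gap has been crossed so far. Since each transition crosses exactly one gap, the~$k$'s sum to~$t$, hence every gap has at least~$2^{m+1-t}$ hashes; for~$t\le m-2$ (i.e., when simulating the non-halting transitions~$1,\ldots,m-1$) this lower bound is~$2^3=8$, so the relevant gap is a positive multiple of~$4$ and the lemma applies with~$4n$ equal to its actual size. With this corrected invariant, your induction, your treatment of the halting step, and your appeal to Proposition~\ref{prop:zero-play-game} all go through unchanged.
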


\begin{corollary}\label{cor:bakw}
  If~$T$ halts on the empty word, then~$L$ contains a winning position for~$G$.
\end{corollary}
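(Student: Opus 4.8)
The plan is to derive Corollary~\ref{cor:bakw} directly from Proposition~\ref{prop:bakw} by showing that the winning position~$w$ produced there belongs to the language~$L$ of starting positions. Recall that
\begin{equation*}
    L = \$\tla q_0 (\# + \beta)^*\quad,
\end{equation*}
and that Proposition~\ref{prop:bakw} exhibits, under the assumption that~$T$ halts on the empty word after~$m$ transitions, the word
\begin{equation*}
    w = \$ \tla q_0 \left(\#^{2^{(m+1)}}\beta\right)^m\quad.
\end{equation*}

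First I would observe that~$T$ halts on the empty word if and only if it does so after some finite number~$m$ of transitions; this~$m$ is well defined since~$T$ is deterministic, so there is exactly one run from the initial configuration. Then I would invoke Proposition~\ref{prop:bakw} with this~$m$ to get that~$w$ is a winning position for~$G$. Finally I would check the purely syntactic fact that~$w\in L$: indeed~$w$ has the prefix~$\$\tla q_0$, and the remaining suffix~$\left(\#^{2^{(m+1)}}\beta\right)^m$ is a concatenation of~$\#$'s and~$\beta$'s, hence lies in~$(\#+\beta)^*$. Therefore~$L$ contains a winning position for~$G$, which is exactly the claim.

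There is essentially no obstacle here: the corollary is an immediate packaging of Proposition~\ref{prop:bakw}, and the only thing to verify is that the explicit word constructed there matches the shape of~$L$, which it visibly does. (The substantive work — the construction of~$w$ and the inductive simulation argument — is done in the lemma and proposition preceding the corollary, relying in turn on the doubling of the~$\#$-blocks so that each of the~$m$ transitions has enough room, namely~$2^{m+1}\geq 4n$ for the relevant~$n$, to be simulated completely.) If one wanted to be fully explicit, one could also note that combining Corollary~\ref{cor:forw} and Corollary~\ref{cor:bakw} shows that~$T$ halts on the empty word if and only if~$L$ contains a winning position for~$G$, so that the halting problem reduces to Problem~\ref{p.winn-posi-cap-regu} with~$L$ star-free regular, establishing Theorem~\ref{wpwr-unde}.
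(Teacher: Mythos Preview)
Your proposal is correct and matches the paper's intent: the corollary is stated without proof precisely because it follows immediately from Proposition~\ref{prop:bakw} once one observes that the word~$w=\$\tla q_0\big(\#^{2^{(m+1)}}\beta\big)^m$ constructed there lies in~$L=\$\tla q_0(\#+\beta)^*$. Your additional remarks about determinism of~$T$ and the combination with Corollary~\ref{cor:forw} to obtain Theorem~\ref{wpwr-unde} are accurate and mirror the paper's concluding paragraph.
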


Finally, Corollaries~\ref{cor:forw} and~\ref{cor:bakw} directly yield Theorem~\ref{wpwr-unde}. Indeed, assume to the contrary that Problem~\ref{p.winn-posi-cap-regu} is decidable, then one would conclude that the halting problem on the empty word is decidable. Nevertheless the latter problem is well-known to be undecidable \cite{Sipser:2013}.
Let us conclude Section~\thesection{} with a conjecture.
\begin{problem}\label{prob:winn-posi-regu}
  Given a strongly terminating game~$\mathcal{G}$, decide whether the winning positions of~$\mathcal{G}$ form a regular language.
\end{problem}

\begin{conjecture}
  Problem \ref{prob:winn-posi-regu} is undecidable.
\end{conjecture}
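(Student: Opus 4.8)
The final statement to prove is Theorem~\ref{wpwr-unde}: Problem~\ref{p.winn-posi-cap-regu} is undecidable even when $L$ is star-free regular. The plan is to reduce from the halting problem of a deterministic Turing machine $T$ on the empty word, by constructing the game $G$ and the language $L$ exactly as they appear in the paper's ``Instantiation'' subsection, and then proving two directions: (1) if $L$ contains an $\N$-position of $G$, then $T$ halts; and (2) if $T$ halts, then $L$ contains an $\N$-position of $G$. Since the halting problem is undecidable, so is Problem~\ref{p.winn-posi-cap-regu}.

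First I would establish that $G$ is effectively a \emph{zero-player game} on any run starting from a position in $L$. The key structural invariant is that every reachable position contains exactly one symbol from $Q$ and at most one ``head symbol'' from $M=\{\tra,\trb,\tla,\tlb\}$, with the head-symbol type alternating between the A-flavoured ones ($\tra,\tla$) on A's turns and the B-flavoured ones ($\trb,\tlb$) on B's turns (Lemma~\ref{lem:at-most-one-head}). From this, a case analysis on which head symbol is present shows at most one reduction applies at any time (using determinism of $T$ for the transition reductions), so the whole run is forced (Proposition~\ref{prop:zero-play-game}). I would also show that B always has a reply available unless A just played a halting reduction~\eqref{rule-a-win-left}/\eqref{rule-a-win-righ}, in which case the position has no head symbol and the game is over with A having made the last move, hence A wins (Lemma~\ref{lem:bob-does-not-play}). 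So $w_0\in L$ is an $\N$-position if and only if the forced run terminates with a halting reduction.

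Next I would set up the correspondence between a position of $G$ (after erasing the padding symbols $\#$ via a morphism $\varphi$) and a configuration of $T$: $\state$, $\tape$, $\head$ extracted in the obvious way. For the forward direction, Proposition~\ref{prop:turi-simu} shows by induction on even indices $i$ that the configuration coded by $\varphi(w_i)$ is actually reached by $T$ on the empty word --- the pairs of moves (A's transition or shift reduction, followed by B's forced completion) either leave the coded configuration unchanged (pure shift) or advance it by exactly one $T$-transition. Hence if the run of $G$ ends with a halting reduction, $T$ reaches a halting configuration, so $T$ halts (Corollary~\ref{cor:forw}). For the backward direction, the point is that a single $T$-transition is simulated ``completely'' in $G$ (apply the transition, then shift the head all the way across the padding to the next tape cell) provided there are $4n$ consecutive $\#$'s on the correct side for some $n\geq 1$; an induction then shows that if $T$ halts after $m$ transitions, the specific padded word $w=\$\tla q_0(\#^{2^{(m+1)}}\beta)^m$ is a winning position for A, and this word lies in $L=\$\tla q_0(\#+\beta)^*$ (Proposition~\ref{prop:bakw}, Corollary~\ref{cor:bakw}). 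One must also check the two easy syntactic facts: $G$ is strongly terminating (each reduction strictly shortens the word --- every rule consumes some $\#$'s net, or replaces $c\tla p$/$p\tra c$ by a single symbol), and $L$ is star-free.

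I expect the main obstacle to be getting the padding arithmetic exactly right so that every transition in the halting run of $T$ has enough room to be simulated completely. A right-moving transition consumes $\#$'s to its right and a left-moving one consumes $\#$'s to its left, and after a shift the head symbol moves two positions, so a block of $4n$ $\#$'s is exactly what is needed for one complete simulated step that leaves $2n$ $\#$'s between the (new) head position and the next tape cell. Choosing the initial padding blocks to be $\#^{2^{(m+1)}}$ ensures that even after $m$ transitions --- each of which roughly halves the available padding between successive tape cells --- there is always a block of size at least $4$ (in fact a power of two) available on whichever side the head next moves. Verifying that this bound genuinely survives all $m$ steps, including the interaction between the head crossing into a previously-visited cell versus a fresh blank cell, is the delicate bookkeeping; everything else is a routine but lengthy case analysis over the eight rule schemas.
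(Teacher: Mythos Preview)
You have proved the wrong statement. The item under consideration is the \emph{Conjecture} that Problem~\ref{prob:winn-posi-regu} is undecidable, where Problem~\ref{prob:winn-posi-regu} asks, given a strongly terminating game~$G$, whether the set of winning positions of~$G$ forms a regular language. The paper does \emph{not} prove this; it is explicitly left as a conjecture, with no proof to compare against.

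What you have written is (a faithful outline of) the paper's proof of Theorem~\ref{wpwr-unde}, which concerns the different Problem~\ref{p.winn-posi-cap-regu}: given~$G$ and a regular language~$L$, decide whether~$L$ contains an $\N$-position. Your reduction shows that one cannot decide whether a \emph{given} star-free language meets the $\N$-positions, but it says nothing about whether the language~$\mathcal{L}_0^c$ of all $\N$-positions of~$G$ is itself regular. Indeed, for the specific game~$G$ you construct, the set of $\N$-positions could well be regular (or not) regardless of whether~$T$ halts; your argument never analyses this set, only its intersection with the particular~$L=\$\tla q_0(\#+\beta)^*$. To attack the conjecture one would need a reduction in which halting of~$T$ is equivalent to regularity of the full $\N$-language of the constructed game, which is a substantially different (and, as the paper indicates, open) problem.
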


\section{Perspectives}

Rewrite games open the door to a large field of new interesting questions, as it generalizes a large set of combinatorial games. In the previous sections, we have  given a couple of open problems that we found the most relevant ones in the context of taking-and-merging games. Could they be adapted with an alphabet of a larger size?

Moreover, there are other instances of rewrite games that would make sense to be investigated as their rules can also be expressed with piles of tokens. Consider for example taking-and-merging games where rules of the form $\move{\ta^k}{\tb^\ell}$ are adjoined. Such games can be seen as taking games where tokens have two colors, say black (for $\ta$) and white (for $\tb$). Moves consist in either removing tokens or flipping black tokens (that become white).
In such games, what would the $\mathcal{L}_i$ languages look like? For example, in the game $\{\move{\ta}{\tb},\emove{\ta},\move{\tb}{\varepsilon}\}$, each Grundy language is regular.

\acknowledgements

We would like to thank Idris Ayouaz for his useful computations made on several instances of this game and the reviewer for his useful comments.

\bibliographystyle{abbrv}

\end{document}